\makeatletter
\def\input@path{{template/}}
\makeatother

\documentclass[runningheads]{llncs}

\newif\ifpublish
\newif\ifextended %
\publishtrue
\extendedtrue

\usepackage{graphicx}
\usepackage{xcolor}
\usepackage{listings}
\usepackage{amsmath}
\usepackage{xparse}
\usepackage{pgfplots}
\pgfplotsset{compat=1.17}
\usepackage{tikz}
\usetikzlibrary{positioning, shapes.geometric, arrows.meta, calc}
\usepackage{booktabs}
\usepackage{etoolbox}
\usepackage{xspace}
\usepackage{enumitem}
\setlist{nosep, leftmargin=*}
\usepackage{hyperref}
\usepackage{cleveref}

\hypersetup{
	colorlinks=true,
	linkcolor={magenta!80!black},
	citecolor={green!55!black},
	urlcolor={black!55},
	pdfborder={0 0 0},
}
\crefname{section}{Section}{Sections}
\Crefname{section}{Section}{Sections}
\crefname{subsection}{Section}{Sections}
\Crefname{subsection}{Section}{Sections}
\crefname{appendix}{Appendix}{Appendices}
\Crefname{appendix}{Appendix}{Appendices}
\crefname{figure}{Figure}{Figures}
\Crefname{figure}{Figure}{Figures}
\crefname{table}{Table}{Tables}
\Crefname{table}{Table}{Tables}
\crefname{equation}{Equation}{Equations}
\Crefname{equation}{Equation}{Equations}
\crefname{theorem}{Theorem}{Theorems}
\Crefname{theorem}{Theorem}{Theorems}
\crefname{lemma}{Lemma}{Lemmas}
\Crefname{lemma}{Lemma}{Lemmas}
\crefname{definition}{Definition}{Definitions}
\Crefname{definition}{Definition}{Definitions}

\newif \ifcomments

\commentsfalse

\ifcomments
	\newcommand{\deepak}[1]{{\color{olive} \textbf{Deepak:} #1}}
	\newcommand{\karl}[1]{{\color{teal} \textbf{Karl:} #1}}
	\newcommand{\alberto}[1]{{\color{blue} \textbf{Alberto:} #1}}
	\newcommand{\george}[1]{{\color{red} \textbf{George:} #1}}
	\newcommand{\todo}[1]{{\color{red} \textbf{TODO:} #1}}
	\newcommand{\arnab}[1]{{\color{magenta} \textbf{Arnab:} #1}}
	\newcommand{\remove}[1]{{\color{red!50!black} \textbf{remove:} #1}}
\else
	\newcommand{\deepak}[1]{}
	\newcommand{\karl}[1]{}
	\newcommand{\alberto}[1]{}
	\newcommand{\george}[1]{}
	\newcommand{\todo}[1]{}
	\newcommand{\arnab}[1]{}
	\newcommand{\remove}[1]{}
\fi

\makeatletter
\patchcmd{\@makecaption}{\small}{\scriptsize}{}{\ClassError{macros}{caption patch failed}{}}
\patchcmd{\table}{\setlength\belowcaptionskip{10\p@}}{\setlength\belowcaptionskip{4\p@}}{}{\ClassError{macros}{table caption patch failed}{}}
\makeatother
\newcommand{\sysname}{Guppy}

\newcommand{\codelink}{
	\ifpublish
		\url{https://github.com/asonnino/guppy} (commit \texttt{9c3075f})
	\else
		\url{https://anonymous.4open.science/r/minnow}
	\fi
}

\newcommand{\stream}{S}
\newcommand{\streamid}{id}
\newcommand{\streamupdate}{p}
\newcommand{\checkpointnumber}{c}
\newcommand{\cp}{\checkpointnumber}
\newcommand{\streamstate}{v}

\newcommand{\StreamUpdateFn}{A}

\newcommand{\streampoint}{p}

\newcommand{\localcommit}{{\sf localCommit}}
\newcommand{\genheader}{{\sf genHeader}}
\newcommand{\processcp}{{\sf processUpd}}

\newcommand{\localdigest}[1]{d_{#1}}
\NewDocumentCommand{\globaldigest}{g}{%
	D\IfValueT{#1}{_{#1}}
}
\newcommand{\updateproof}[1]{\Pi_{#1}}
\newcommand{\validityproof}{\pi}

\newcommand{\header}[1]{\mathcal{H}_{#1}}
\NewDocumentCommand{\checkpointsign}{g}{%
	\sigma\IfValueT{#1}{_{#1}}%
}
\NewDocumentCommand{\contents}{g}{%
	\mathcal{C}\IfValueT{#1}{_{#1}}%
}

\newcommand{\pk}{{\sf pk}}
\newcommand{\validatorpk}{\pk_{\sf chain}}

\newcommand{\batchsize}{b}
\newcommand{\bundlesize}{N}
\newcommand{\numbatches}{n}

\newcommand{\batch}[1]{B_{#1}}
\newcommand{\bundle}[1]{U_{#1}}

\newcommand{\chaintag}{{\sf bc}}
\newcommand{\circuittag}{{\sf cir}}

\newcommand{\minnowhash}{H_{\sf \MakeLowercase{\sysname}}}

\newcommand{\chainbundle}[1]{\bundle{#1}^\chaintag}
\newcommand{\chainupdates}[1]{\chainbundle{#1}}
\newcommand{\circuitbundle}[1]{\bundle{#1}^\circuittag}

\newcommand{\numcircuitbatches}{\numbatches}
\newcommand{\numcircuitupdates}{\bundlesize}
\newcommand{\numchainupdates}[1]{\bundlesize_{#1}^\chaintag}
\newcommand{\hashprotocol}{\sysname-Sparse}
\newcommand{\mtprotocol}{\sysname-A}
\newcommand{\minnowMTA}{\mtprotocol}
\newcommand{\minnowmtatime}{t^A}

\newcommand{\latency}{t_{\sf e2e}}
\newcommand{\parallelphasetag}{{\sf par}}
\newcommand{\aggregationphasetag}{{\sf agg}}
\newcommand{\sequentialphasetag}{{\sf seq}}

\newcommand{\numbatchespersubbundle}{\numbatches}

\newcommand{\numupdatespersubbundle}{y}
\newcommand{\subbundlesize}{\numupdatespersubbundle}
\newcommand{\numsubbundles}[1]{\beta_{#1}}

\newcommand{\aggregationfactor}{\alpha}
\newcommand{\aggregationlevels}{l}

\newcommand{\minnowMTB}{\sysname-B}

\newcommand{\parallelphasestatement}{S_\parallelphasetag}
\newcommand{\parallelphaseproof}[1]{\Pi_\parallelphasetag^{(#1)}}

\newcommand{\parallelphasetime}{t_\parallelphasetag}
\newcommand{\parallelphasebatchsize}{y}

\newcommand{\sequentialphasestatement}{S_\sequentialphasetag}
\newcommand{\sequentialphaseproof}[1]{\Pi_\sequentialphasetag^{#1}}
\newcommand{\sequentialphasepubinputs}[1]{pub_\sequentialphasetag^{#1}}
\newcommand{\sequentialphasetime}{t_\sequentialphasetag}

\newcommand{\intermediatephasestatement}{S_\aggregationphasetag}
\newcommand{\intermediatephaseproof}[2]{\Pi_\aggregationphasetag^{(#1)\rightarrow(#2)}}
\newcommand{\intermediatephasepubinputs}[2]{pub_\aggregationphasetag^{(#1)\rightarrow(#2)}}
\newcommand{\intermediatephasetime}{t_\aggregationphasetag}

\newcommand{\treedepth}{d}
\newcommand{\merklepath}{path}
\newcommand{\newfield}[1]{\localdigest{#1}}
\newcommand{\pubinputs}[1]{pub^{#1}}
\newcommand{\streamidset}{\mathcal{ID}}
\newcommand{\globalstatemap}[1]{V^{#1}}

\newcommand{\throughput}{\rho}
\newcommand{\streamnewbit}{e}
\newcommand{\numleaves}[1]{L_{#1}}
\newcommand{\minnowMTAstatement}{S_A}

\newcommand{\bundletime}{t_{\sf bundle}}

\newcommand{\maxnumstreams}{M}

\title{\sysname: Efficient Light Clients via Recursive Zero-Knowledge Proofs}

\ifpublish
    \author{
        George Danezis\inst{1,2} \and
        Deepak Maram\inst{1} \and
        Arnab Roy\inst{1} \and
        Alberto Sonnino\inst{1,2} \and
        Karl W\"ust\inst{1}
    }
    \institute{Mysten Labs \and University College London}
    \authorrunning{G. Danezis et al.}
\else
    \author{}
    \institute{}
\fi

\begin{document}

\maketitle
\ifpublish\else
    \pagestyle{plain}
    \thispagestyle{plain}
\fi

\begin{abstract}
  Traditional light clients rely on validators committing to the
  \emph{entire blockchain state} at every block via a state commitment such
  as a Merkle tree,
  allowing clients to verify facts using short proofs.
  However, maintaining large and ever-growing state trees imposes a
  significant burden on validators and lies on the critical path of
  block production.
  As a result, many modern high-throughput chains avoid this approach
  altogether.
  This work asks whether efficient inclusion proofs can be supported
  \emph{without requiring
    validators to maintain full state commitments}.
  We present \sysname{}, a protocol that achieves this by having validators
  commit to just the \emph{state updates}.
  An off-chain, untrusted service, secured by recursive
  Zero-Knowledge Proofs (ZKPs), then maintains a verifiable Merkle tree
  over the full state.
  This design keeps validator overhead negligible and does not increase
  the asymptotic complexity of block construction.
  Our design rests on two key technical ideas.
  First, a hash-chain commitment moves validator signature verification
  out of the ZK circuit, keeping the proving circuit efficient.
  Second, we design a parallel recursive proving pipeline that
  leverages cheap recursion in modern ZKPs to ensure latency grows only
  logarithmically with throughput.
  Our Plonky2-based implementation demonstrates that \sysname{} can
  maintain a Merkle tree of size~$2^{30}$ while processing thousands of updates
  per second, adding only~2-4\,s of latency.
\end{abstract}

\section{Introduction}
\label{sec:intro}

Light clients let resource-constrained devices, such as wallets on phones, IoT devices, or smart contracts on other chains, verify facts about the chain with short cryptographic proofs instead of downloading and executing the entire ledger. The common design has validators commit to the entire state, of size $M$, with a state commitment such as a Merkle tree of depth $\log M$, as Ethereum~\cite{ethereum} does. Maintaining that commitment is costly because each update touches multiple database entries, the state only grows over time, and the computation sits on the critical path to finality. High-throughput chains~\cite{sui-lutris,solana} process thousands of transactions per second and therefore do not maintain a full-state commitment at all. Their light clients are left without a way to verify state.

Sunfish~\cite{sunfish} shows that a weaker query is cheap to serve: the state of a value at the checkpoint where it was last modified (we call each block a checkpoint). What applications need is completeness, the state of any object at any checkpoint, for example a wallet's current balance or a dapp's package state. This raises the question: \emph{can a blockchain support efficient light clients with completeness without requiring validators to commit to the entire state?} Efficient means that proof sizes remain logarithmic in the system parameters.

\sysname{} answers it by having validators commit only to the state updates of each checkpoint, not to the state. The difference is large: Sui holds roughly $2^{30}$ objects but updates at most about 10k of them per second (about $0.001\%$). An untrusted off-chain ZK service, akin to a ZK co-processor~\cite{axiom,brevis,lagrange-coprocessor}, maintains a Merkle tree over the entire state and, after every checkpoint, proves with a recursive zero-knowledge proof~\cite{recursion-cycles,bitansky2013} that the new root is the correct successor of the previous one under the checkpoint's updates. The straightforward recursive statement verifies the previous proof, verifies the checkpoint under the validator keys, and applies the updates to the tree. Plonky2~\cite{plonky2} makes such recursion practical, at about 6k gates and 0.3\,s to prove. A client asking for an address's balance at checkpoint $\cp$ receives a Merkle path and the proof, even if the address last changed years ago. Because committee key rotations are infrequent (about daily on Ethereum and Sui), clients can be expected to know the validator keys.

Instantiating this design on a high-throughput chain faces three obstacles: validator signature schemes such as BLS are not ZK-friendly, extracting the relevant fields (say balances) may require decoding the entire checkpoint, and the proof must complete before the next checkpoint arrives because the previous proof is one of its inputs, which is hard at thousands of updates per second.

\minnowMTA{} (\Cref{sec:minnow-mt-a}) removes the first two obstacles with one new header field. Suppose checkpoint $\cp$ sets the balances of addresses {\tt 0x123} and {\tt 0x234} to 30 and 100, and let $\chainupdates{\cp}$ be the list of such updates. Each validator computes the hash-chain head $\localdigest{\cp} = H(\localdigest{\cp-1}, \chainupdates{\cp})$ with a ZK-friendly hash (Poseidon), in practice over fixed-size batches, and puts $\localdigest{\cp}$ in the header of checkpoint $\cp$. The circuit no longer verifies checkpoint validity; it exposes the chain head as a public input, and the client checks that the head appears in the signed header (a few hundred bytes) and verifies the validator signatures. Checking the head at checkpoint $\cp$ attests to every earlier checkpoint because heads are chained. The third obstacle remains: every proof must finish before the next checkpoint arrives, $\minnowmtatime \leq \bundletime$ (middle row of \Cref{fig:minnow-protocols}); our Plonky2 benchmarks (\Cref{sec:eval}) cap \minnowMTA{} at about 290\,updates/s for a tree of $2^{30}$ streams, beyond which latency grows without bound.

\begin{figure}[t]
  \centering
  \begin{tikzpicture}[
      >=Stealth,
      font=\scriptsize,
      cp/.style={rectangle, draw, minimum width=4.7cm, minimum height=0.45cm, align=center},
      prover/.style={rectangle, draw, minimum width=1.0cm, minimum height=0.42cm, align=center, fill=black!4},
      wide/.style={rectangle, draw, minimum width=2.2cm, minimum height=0.42cm, align=center, fill=black!4},
      rowlabel/.style={font=\scriptsize, anchor=east},
      feed/.style={->, dashed, black!60},
    ]
    \node[cp] (A) at (0,0) {Checkpoint $i$};
    \node[cp, right=0cm of A] (B) {Checkpoint $i+1$};
    \node[right=0.15cm of B] (C) {$\cdots$};
    \draw[<->] ([yshift=0.1cm]A.north west) -- node[above, inner sep=1pt] {$\bundletime$} ([yshift=0.1cm]A.north east);
    \draw[<->] ([yshift=0.1cm]B.north west) -- node[above, inner sep=1pt] {$\bundletime$} ([yshift=0.1cm]B.north east);
    \node[rowlabel] at ([xshift=-0.2cm]A.west) {Blockchain};

    \node[prover] (E) at ($(A.center)+(0.6,-1.2)$) {$\minnowMTAstatement$};
    \node[prover] (F) at ($(B.center)+(0.6,-1.2)$) {$\minnowMTAstatement$};
    \draw[->] (E) -- node[above, inner sep=1pt] {$\updateproof{i}$} (F);
    \draw[->] (F.east) -- node[above, inner sep=1pt] {$\updateproof{i+1}$} ++(1.6,0);
    \draw[<->] ([yshift=0.08cm]E.north west) -- node[above, inner sep=1pt] {$\minnowmtatime$} ([yshift=0.08cm]E.north east);
    \node[rowlabel] at ([xshift=-0.2cm]A.west |- E) {\minnowMTA{}};

    \node[wide] (H) at ($(A.center)+(-0.95,-2.35)$) {$\parallelphasestatement$};
    \node[prover] (K) at ($(A.center)+(1.35,-2.35)$) {$\sequentialphasestatement$};
    \draw[->] (H) -- node[above, inner sep=1pt] {$\Pi^{\mathit{tmp}}_i$} (K);
    \node[wide] (I) at ($(B.center)+(-0.95,-2.35)$) {$\parallelphasestatement$};
    \node[prover] (L) at ($(B.center)+(1.35,-2.35)$) {$\sequentialphasestatement$};
    \draw[->] (I) -- node[above, inner sep=1pt] {$\Pi^{\mathit{tmp}}_{i+1}$} (L);
    \draw[->] (K) -- ++(0,-0.32) -| node[below, pos=0.25, inner sep=1pt] {$\updateproof{i}$} (L);
    \draw[->] (L.east) -- node[above, inner sep=1pt] {$\updateproof{i+1}$} ++(1.1,0);
    \draw[<->] ([yshift=0.08cm]H.north west) -- node[above, inner sep=1pt] {$\parallelphasetime^B$} ([yshift=0.08cm]H.north east);
    \draw[<->] ([yshift=0.08cm]K.north west) -- node[above, inner sep=1pt] {$\sequentialphasetime^B$} ([yshift=0.08cm]K.north east);
    \node[rowlabel] at ([xshift=-0.2cm]A.west |- H) {\minnowMTB{}};

    \draw[feed] ([xshift=-0.6cm]A.south) -- node[right, pos=0.3, inner sep=1pt] {$\chainupdates{i}$} (E.north west);
    \draw[feed] ([xshift=-1.9cm]A.south) -- node[left, pos=0.82, inner sep=1pt] {$\chainupdates{i}$} (H.north west);
    \draw[feed] ([xshift=-0.6cm]B.south) -- node[right, pos=0.3, inner sep=1pt] {$\chainupdates{i+1}$} (F.north west);
    \draw[feed] ([xshift=-1.9cm]B.south) -- node[left, pos=0.82, inner sep=1pt] {$\chainupdates{i+1}$} (I.north west);

    \draw[dotted, black!50] ($(A.west |- E.north)+(-1.7,0.33)$) -- ++(12.0,0);
    \draw[dotted, black!50] ($(A.west |- H.north)+(-1.7,0.33)$) -- ++(12.0,0);
  \end{tikzpicture}
  \caption{The \sysname{} protocols. The top row shows a blockchain producing
    update bundles. The middle row depicts \minnowMTA{}, which takes
    $\minnowmtatime$ to prove a bundle. The bottom row depicts a simple
    version of \minnowMTB{} with two phases, which take
    $\parallelphasetime^B$ and $\sequentialphasetime^B$ for the parallel and
    sequential phases respectively. \minnowMTA{} requires
    $\minnowmtatime \leq \bundletime$ whereas \minnowMTB{} requires
    $\sequentialphasetime^B \leq \bundletime$.}
  \label{fig:minnow-protocols}
\end{figure}

\minnowMTB{} (\Cref{sec:minnow-mt-b}) starts from the observation that the sequential dependency is inherent to incrementally verifiable computation (IVC)~\cite{valiant2008incrementally} but can be confined to a small recursive step. It splits the statement: non-recursive base proofs, each covering a sub-bundle of updates, are produced in parallel on many machines (the boxes labeled $\parallelphasestatement$ in the bottom row of \Cref{fig:minnow-protocols}), an aggregation tree combines them, and a small cyclic proof ($\sequentialphasestatement$) verifies the previous cyclic proof and the aggregated proof, once per checkpoint. Only the cyclic proof, whose cost is constant (0.63\,s), must fit within the checkpoint interval, so \minnowMTB{} sustains any throughput given enough machines. Because aggregation is a tree, end-to-end latency grows only logarithmically with throughput.

We implement both protocols in Rust on Plonky2. The validator-side commitment for $2^{13}$ updates takes under 10\,ms. A one-month measurement of Sui shows that a light-client service must sustain thousands of updates per second at peak. On a distributed testbed of up to 15 AWS machines, \minnowMTB{} with 14 machines processes 2{,}250\,updates/s at a median latency of 4\,s; microbenchmarks predict 8{,}571\,updates/s at 3.15\,s with 41 machines (\Cref{sec:eval}). \sysname{} also broadens what light clients can ask: the same mechanism authenticates streams of events, balances, or objects (\Cref{sec:model}), including queries such as all events of a given type at a checkpoint, which Ethereum's state tree cannot answer.

\paragraph{Contributions.} We make the following contributions:
\begin{itemize}
    \item We present \sysname{}, the first protocol giving light clients completeness (current and historical state of any supported stream) with negligible validator overhead by adding one header field per checkpoint.
    \item We introduce the techniques that make \sysname{} scale: hash-chain commitments that move signature checks out of the circuit, parallel base proofs, tree aggregation, and a constant-cost cyclic proof, giving latency logarithmic in throughput; many apply to other IVC-style computations.
    \item We implement \sysname{} and show thousands of updates per second at a few seconds of latency.
\end{itemize}

\section{Model}
\label{sec:model}

\subsection{Blockchain model}

We model the blockchain as a sequence of \emph{checkpoints} (or \emph{blocks}), where each checkpoint is composed of \emph{contents}, a \emph{header}, and a \emph{signature} from validators. The checkpoint contents $\contents{}$ consist of a list of transactions along with their execution effects, $\{ {\sf tx}_1, {\sf tx}_2, \ldots \}$, where each ${\sf tx}_i$ contains both the original transaction and its \emph{effects}~\cite{sui-lutris}. Effects can contain information such as the outcome of the transaction's execution, its impact on the blockchain's state, and the emitted events~\cite{solidity-events}. The header $\header{}$ includes a checkpoint sequence number~$\checkpointnumber$ unique to each checkpoint, and most commonly it also includes a commitment to the checkpoint contents; we will propose adding one field to the header.

Let $\validatorpk$ be the current committee's public key. The validators in the committee jointly sign the checkpoint header to produce a signature~$\checkpointsign{}$ that can be verified by running ${\sf Sig.verify}(\checkpointsign{}, \header{}, \validatorpk)$ using an unforgeable signature scheme (EUF-CMA). Committee handoffs are authenticated at special checkpoints and the genesis committee is public, allowing anyone to verify a committee by following the sequence from genesis. We denote the contents and header of checkpoint $c$ by $\contents{c}$ and $\header{c}$ respectively, assuming the presence of functions ${\sf IsValid}(\contents{c})$ and ${\sf IsValid}(\header{c})$ that verify their validity using the committee keys. Our model applies to any chain that has signed-checkpoint semantics, such as Ethereum~\cite{ethereum} and Sui~\cite{sui-lutris}. Ethereum uses an account model~\cite{ethereum} where validators commit to a Merkle tree root of all accounts and contracts at the end of every checkpoint, whereas Sui uses an object model supported by a key-value store without a state commitment to maintain high throughput.

\subsection{Streams}

We model blockchain data as a collection of \emph{streams}. A stream~$\stream$ represents a logical view of blockchain state evolution, defined as a key-value map from unique \emph{stream identifiers} (denoted $\streamid$) to short \emph{stream states} (denoted $\streamstate_{\streamid}$) or commitments thereof, i.e., $\stream = \{ (\streamid_1, \streamstate_{\streamid_1}), \ldots, (\streamid_M, \streamstate_{\streamid_M}) \}$. Each stream evolves deterministically as new checkpoints are processed using a function $\StreamUpdateFn$, which derives all stream updates from the current stream and a checkpoint's contents:
\[
  \chainupdates{\cp} \gets
  \StreamUpdateFn(\stream, \contents{\cp}).
\]
The output $\chainupdates{\cp}$ is the stream update set, a list of (stream identifier, new state, new bit flag) tuples: $\chainupdates{\cp} = [(\streamid_1, \streamstate'_{\streamid_1}, \streamnewbit_{\streamid_1}), \ldots, (\streamid_k, \streamstate'_{\streamid_k}, \streamnewbit_{\streamid_k})]$. The new bit flag indicates whether the identifier is new or already exists; the updated stream~$\stream'$ follows directly from $\stream$ and $\chainupdates{\cp}$. The size of the set~$\chainupdates{\cp}$ is a key metric when evaluating the performance of the \sysname{} protocols, and we use the term updates/s (updates per second) to refer to the number of stream updates produced or processed in a second.

\begin{definition}[Stream]\label{def:stream}
  Given an identifier type, value type, and a stream update
  function~$\StreamUpdateFn$, a stream~$\stream$ is a collection of
  stream identifiers and
  their corresponding states: $\stream = \{ (\streamid_1,
    \streamstate_{\streamid_1}), \ldots, (\streamid_M,
    \streamstate_{\streamid_M}) \}$.
\end{definition}

The stream abstraction is general and can represent a variety of indexing schemes. An address-balance stream that tracks the balance of each address uses the address as its identifier (e.g., $\streamid = {\tt 0x123}$) and the current balance as its state. Its stream update function extracts all balance updates for this address $\{ \streamupdate_1, \ldots, \streamupdate_n \}$ from the checkpoint contents and updates the balance as $\streamstate'_{\streamid} \gets \streamstate_{\streamid} + \sum_{i=1}^n \streamupdate_i$. An event stream tracking all events of a given type uses the event type as its identifier (e.g., $\streamid = {\tt exampleContract::exampleEvent}$) and a hash-chain commitment of all matching events as its state, updating by extracting matching events $\{ \streamupdate_1, \ldots, \streamupdate_n \}$ and folding them into the hash-chain as $\streamstate'_{\streamid} \gets H(\ldots H(H(\streamstate_{\streamid}, \streamupdate_1), \streamupdate_2) \ldots, \streamupdate_n)$. A blockchain may support object, address-object, or custom developer-specified streams that define a stream update function over filtered event types or the union of multiple event types.

\subsection{Goals} \label{sec:goals}

Our goal is to support verifiable queries for the state of any stream at the end of any checkpoint~$\cp$, a property we call \emph{completeness}. Let $M$ denote the total number of stream IDs and $k$ the number of stream IDs updated per checkpoint. Our solution should fulfill two requirements. \textbf{R1} (Completeness for varied streams): Checkpoints must contain authenticated data structures that permit querying the state of any of several streams at an arbitrary checkpoint. \textbf{R2} (Light-weight validator changes): The bandwidth, memory, and computational resources a validator spends to build a checkpoint are linear in the size of the checkpoint (i.e., $O(k)$) and in particular do not depend on $M$. We do not place any restrictions on the internal storage of validators, so it is acceptable if the internal storage increases by $O(M)$ to support a new stream.

\subsection{API and Correctness}
\label{sec:api}

We next define the key APIs used by the three participants in \sysname{}: validators~({\bf Val}), the ZK service~({\bf ZKS}), and clients~({\bf Client}). Validators produce a \emph{stream update commitment}~$\localdigest{\cp}$ for each checkpoint, which compactly commits to all stream updates~$\chainupdates{\cp}$ derived from that checkpoint. The ZK service maintains a commitment over the full stream~$\stream$ (that consists of all stream identifiers and their corresponding states) called the \textit{global stream commitment}~$\globaldigest{\cp}$. Clients are interested in learning the state of a stream identifier~$\streamid$ at a specific checkpoint~$\cp$.

\begin{enumerate}
  \item {\bf [Val]} $\localdigest{\cp} \gets \localcommit(\chainupdates{\cp})$: Executed by validators before committing to a checkpoint, this commits to the stream update set $\chainupdates{\cp}$ of the $k$ streams updated in checkpoint~$\cp$, optionally taking prior digests as inputs.
  \item {\bf [Val]} $\header{\cp} \gets \genheader(\cp, \localdigest{\cp}, \ldots)$: The checkpoint header generating function is modified to include $\localdigest{\cp}$ as a new field.
  \item {\bf [ZKS]} $(\globaldigest{\cp}, \updateproof{\cp}) \gets \processcp(\globaldigest{\cp-1}, (\header{\cp}, \chainupdates{\cp}))$: The ZK service processes the new checkpoint updates to advance the global commitment.
  \item {\bf [ZKS]} $(\streamstate, \validityproof) \gets {\sf readState}(\streamid, \cp)$: Returns the stream state after checkpoint $\cp$ for the stream identifier $\streamid$ along with a validity proof.
  \item {\bf [Client]} $0/1 \gets {\sf verifyState}(\streamid, \cp, \streamstate, \validityproof)$: Verifies the validity proof assuming knowledge of the public keys of the validator quorum.
\end{enumerate}

Intuitively, a stream state $(\streamid, \streamstate)$ is \emph{correct} at checkpoint~$c$ if it incorporates all updates to $\streamid$ up to~$c$ and excludes any that occur afterward.

\begin{definition}[Stream State Correctness]\label{def:stream-correctness}
  A stream state $(\streamid, \streamstate)$ is \emph{correct} at
  checkpoint~$c$ if there exists a checkpoint~$c' \le c$ and a sequence
  of stream updates
  $\chainupdates{c'}, \chainupdates{c'+1}, \ldots, \chainupdates{c}$ such that:
  \begin{enumerate}
    \item For every $c'' \in [c', c]$, the stream update commitment
          $\localdigest{c''} = \localcommit(\chainupdates{c''})$ is included in a
          valid checkpoint header $\header{c''}$, i.e., ${\sf
              IsValid}(\header{c''}) = 1$.
    \item The stream $\streamid$ is updated at checkpoint~$c'$,
          i.e., $(\streamid, \streamstate, \_) \in \chainupdates{c'}$,
          and remains unchanged thereafter: $\forall\, c'' \in (c', c],\;
            (\streamid, \_) \notin \chainupdates{c''}$.
  \end{enumerate}
\end{definition}

\paragraph{Merkle trees.}
The \sysname{} protocols use three standard Merkle-tree operations:
${\sf MT.vf}$ verifies a leaf against a root, ${\sf MT.vfAndUpd}$
additionally updates the leaf and returns the new root, and
${\sf MT.insert}$ appends a leaf at a given index
(specified in \Cref{app:merkle}).

\section{The \sysname{} protocols}

The ZK service maintains a verifiable commitment to all stream states $\{(\streamid_i, \streamstate_i)\}_{i=1}^{M}$ by processing each checkpoint within a ZK circuit, enabling efficient inclusion proofs for clients to query any stream identifier at any checkpoint. Our main protocols employ a Merkle tree because it supports efficient proofs even when tracking all stream identifiers (e.g., $M \approx 2^{30}$), effectively acting as an \emph{authenticated full node}. We first describe validator commitments (\Cref{sec:validator-commitment}), then present \minnowMTA{} (\Cref{sec:minnow-mt-a}) for moderate throughput and \minnowMTB{} (\Cref{sec:minnow-mt-b}) for arbitrarily high throughput; \Cref{sec:minnow-h} describes a hash-based variant for a small fixed set of streams.

\subsection{Validator commitment}
\label{sec:validator-commitment}

Validators commit to the states of all streams created or updated in the current checkpoint by partitioning the update set into fixed-size batches and hashing them into a hash chain. Recall from \Cref{sec:model} that each checkpoint $\contents{\cp}$ yields a stream update set $\chainupdates{\cp}$ of size $\numchainupdates{\cp}$, consisting of updates $\streampoint_{i} = (\streamid_i, \streamstate_i, \streamnewbit_i)$. Validators first partition $\chainbundle{\cp}$ into $\numcircuitbatches = \lceil \numchainupdates{\cp} / \batchsize \rceil$ batches $\{\batch{1}, \batch{2}, \ldots, \batch{n}\}$ of size $\batchsize$, padding the final batch if necessary. They then compute a hash $h_i = H(\batch{i})$ for each batch in parallel, and finally compute the stream update commitment as a hash chain over these batch hashes, where the previous checkpoint's commitment $\localdigest{\cp-1}$ initiates the chain:
\[
  \localdigest{\cp} = H(\ldots H(H(\localdigest{\cp-1}, h_1), h_2) \ldots, h_{\numcircuitbatches}).
\]
We denote this procedure compactly as $\localdigest{\cp} = \minnowhash(\localdigest{\cp-1}, \chainupdates{\cp})$, which implements the $\localcommit$ function from the API in \Cref{sec:api}. The resulting chain head $\localdigest{\cp}$ is included in the checkpoint header $\header{\cp}$ to fulfill the $\genheader$ function. Batching enables parallel commitment generation and later allows parallel proof generation in the ZK pipeline.

\subsection{A first attempt: \minnowMTA{}}
\label{sec:minnow-mt-a}

In \minnowMTA{}, the ZK service processes stream updates in fixed-size bundles of $\numcircuitupdates = \numcircuitbatches \cdot \batchsize$ updates, processing multiple bundles sequentially if a checkpoint exceeds this size. The service maintains a Merkle tree over all stream states, where each leaf stores the latest state $(\streamid, \streamstate)$ and empty leaves represent uninitialized streams. It updates the tree verifiably after processing each bundle. Assuming the circuit has processed all checkpoints up to $\cp$, its public inputs summarize the current system state using the latest Merkle root $\globaldigest{\cp}$, the number of non-empty leaves $\numleaves{\cp}$, the latest validator commitment $\localdigest{\cp}$, and the checkpoint number $\cp$. The circuit takes the most recent proof it has generated, the new bundle of stream updates $\chainbundle{\cp+1} = \{(\streamid_i, \streamstate_i, \streamnewbit_i)\}$ and their corresponding Merkle paths as inputs. It first verifies the previous proof, and then applies each stream update: if an existing stream is updated ($\streamnewbit = 0$) it verifies the Merkle path and updates the leaf with the new value, whereas if a new stream is introduced ($\streamnewbit = 1$) it checks that the insertion occurs at the next available leaf index indicated by $\numleaves{}$.

\paragraph{Circuit parameters and inputs.}
At setup time we fix and hard-code into the circuit the tree depth $\treedepth$ to support $\maxnumstreams = 2^{\treedepth}$ stream IDs, the validator batch size $\batchsize$, and the number of batches per bundle $\numcircuitbatches$ so one invocation processes $\numcircuitupdates = \numcircuitbatches \cdot \batchsize$ updates. Cheap recursion makes these parameters easy to change, since a new circuit simply verifies the old proof in its base case and continues from there. The private inputs are the previous proof with its public inputs along with the old leaf, the new leaf, and the Merkle path for every update, while the public inputs are $\pubinputs{\cp} = \{\globaldigest{\cp}, \localdigest{\cp}, \numleaves{\cp}, \cp\}$. \Cref{app:circuit-a} specifies the full statement $\minnowMTAstatement$.

\paragraph{Serving proofs to clients.}
To answer ${\sf readState}(\streamid, \cp)$, the ZK service returns the state together with $\validityproof = \{\updateproof{\cp}, \header{\cp}, \merklepath_{\streamid}\}$, which consists of the recursive proof with its public inputs, the signed checkpoint header carrying $\localdigest{\cp}$, and the Merkle path of the queried stream. The ${\sf verifyState}$ routine checks the Merkle path against $\globaldigest{\cp}$, verifies the validator signature on the header and checks that it carries $\localdigest{\cp}$ and $\cp$, and finally checks the proof $\updateproof{\cp}$ (\Cref{app:client}).

\paragraph{Security theorem.}
We now state and prove the security of \minnowMTA{}.

\begin{theorem}[\minnowMTA{} Security]
  \label{thm:minnow-security}
  Let $B$ be a blockchain as modeled in \Cref{sec:model}: a sequence of
  valid checkpoints whose headers are signed by the committee, with
  committee handoffs authenticated from a public genesis. Suppose a client
  holding the committee key~$\validatorpk$ runs ${\sf verifyState}(\streamid,
    \cp, \streamstate, \pi)$ against $B$ as specified above. If the verifier
  outputs $1$, and if the following assumptions hold:
  \begin{enumerate}
    \item The signature scheme is existentially unforgeable under
          chosen-message attack.
    \item The hash function used in $\minnowhash$ and Merkle trees is
          collision resistant.
    \item The employed ZK proof system is knowledge-sound for the
          statement~$\minnowMTAstatement$ (including recursive verification).
  \end{enumerate}
  then $(\streamid, \streamstate)$ is correct at checkpoint~$\cp$ of $B$,
  except with negligible probability.
\end{theorem}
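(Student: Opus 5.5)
The argument is by contradiction: from a verifier that accepts an incorrect state we extract a break of one of the three assumptions. Suppose ${\sf verifyState}(\streamid, \cp, \streamstate, \pi)$ outputs $1$ on $\pi = \{\updateproof{\cp}, \header{\cp}, \merklepath_{\streamid}\}$. We peel off the checks in the order the verifier performs them, and each check lets us invoke one assumption.

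\textbf{Step 1 (signature).} Acceptance means ${\sf Sig.verify}(\checkpointsign{}, \header{\cp}, \validatorpk) = 1$, where $\header{\cp}$ carries $\localdigest{\cp}$ and $\cp$. By EUF-CMA, except with negligible probability $\header{\cp}$ is the header the committee actually signed at checkpoint $\cp$ of $B$. Committee handoffs are authenticated from genesis, so $\validatorpk$ is the genuine key. Hence $\localdigest{\cp}$ equals the honest chain head $\minnowhash(\localdigest{\cp-1}, \chainupdates{\cp})$ computed from the true update sets of $B$.

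\textbf{Step 2 (extraction).} We apply knowledge soundness of $\minnowMTAstatement$ recursively. From $\updateproof{\cp}$ we extract a witness: the previous proof with its public inputs, a bundle of updates, and Merkle paths. From the previous proof we extract again, and so on back to the base case. The recursion depth is polynomial, one level per bundle, so standard IVC knowledge soundness (which the theorem assumes for the recursive statement) yields a full transcript. That transcript is a sequence of public inputs $\pubinputs{j} = \{\globaldigest{j}, \localdigest{j}, \numleaves{j}, j\}$ and bundles $U_j$. Each consecutive pair satisfies two relations: $\localdigest{j} = \minnowhash(\localdigest{j-1}, U_j)$, and $\globaldigest{j}$ is the root obtained from $\globaldigest{j-1}$ by applying $U_j$ via ${\sf MT.vfAndUpd}$ and ${\sf MT.insert}$ at index $\numleaves{j-1}$. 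The base case fixes the genesis root, which is the empty tree, and the genesis digest.

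\textbf{Step 3 (hash-chain binding).} This is the main obstacle. We must show that the extracted bundles equal the real $\chainupdates{j}$ for every $j \le \cp$. Steps 1 and 2 give two hash chains, one honest and one extracted, that start from the same public genesis digest and end at the same head $\localdigest{\cp}$. We unwind them from the end. At each step, equal outputs of $H$ with different inputs would be a collision, so the previous digests and the batch hashes agree; one more application of collision resistance makes the batches themselves agree. Two details need care. First, the padding and encoding of batches must be injective, so that equal batch lists imply equal update lists. Second, checkpoint boundaries must be aligned; the public input $\cp$ together with the fixed batch counts handles this. Induction backward to genesis then shows that every extracted $U_j$ equals $\chainupdates{j}$ and every extracted intermediate digest is the honest one. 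The same argument covers checkpoints that are split into several bundles.

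\textbf{Step 4 (Merkle binding and conclusion).} Given Step 3, $\globaldigest{\cp}$ is, by induction on $j$, the root of the Merkle tree whose leaves are the true stream states after applying $\chainupdates{1}, \ldots, \chainupdates{\cp}$. This uses the soundness of ${\sf MT.vfAndUpd}$ and ${\sf MT.insert}$ under collision resistance, together with the $\numleaves{}$ counter, which forces each new identifier to get a fresh leaf. In that tree, the leaf for $\streamid$ holds the value from the last $c' \le \cp$ at which $(\streamid, \streamstate, \_) \in \chainupdates{c'}$, and $\streamid$ is untouched in every later update set. The client's accepted $\merklepath_{\streamid}$ opens $\globaldigest{\cp}$ to $(\streamid, \streamstate)$. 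By Merkle binding, a different opening would yield a collision, so $\streamstate$ is exactly that value. Each header $\header{c''}$ for $c'' \in [c', \cp]$ is a valid checkpoint of $B$, so both conditions of \Cref{def:stream-correctness} hold.

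Every failure event corresponds to a forgery, a collision, or an extraction failure, each of which occurs with negligible probability. A union bound over the polynomially many steps completes the argument.
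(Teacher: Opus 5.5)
Your proposal is correct and follows essentially the same route as the paper's proof. EUF-CMA binds the signed $\localdigest{\cp}$ to the real header of $B$; knowledge soundness, assumed for the recursive statement, yields the full trace back to the base case; and collision resistance of $\minnowhash$ and of the Merkle hash forces the extracted updates and roots to match the honest ones. The paper organizes this as hybrid games with a telescoping bound rather than a direct reduction, and your backward unwinding of the hash chain, including injective padding and checkpoint alignment, makes explicit what the paper's Game~3 leaves implicit.
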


Informally, the theorem holds because the hash-chain structure of the validator commitment ($\minnowhash$) ensures that verifying the signature on the final header guarantees the correctness of all prior stream update commitments; this relies on signature unforgeability and hash collision resistance. The knowledge-soundness of the recursive ZK proofs further guarantees that all updates to the Merkle root match the stream updates. \Cref{sec:minnow-a-proof} gives the formal proof under the stated assumptions.

\paragraph{Cost analysis.}
Let $h$ denote the constraint cost of a single hash invocation and $r$ the cost of verifying one ZK proof. Each circuit invocation performs the ZK verification costing $r$ constraints, Merkle operations requiring $2 \cdot \numcircuitupdates \cdot \treedepth$ hashes to verify and update each leaf, and stream update commitments needing $\numcircuitupdates + \numcircuitupdates / \batchsize$ hashes to compute $\minnowhash$, bringing the total cost to approximately $r + \numcircuitupdates \cdot h \cdot (2\treedepth + 1 + 1/\batchsize)$ constraints. A key optimization in \minnowMTA{} avoids verifying validator signatures inside the ZK circuit, which would otherwise be prohibitively expensive or require validators to adopt a ZK-friendly signature scheme.

\ifextended
  Rather than proving each checkpoint individually, multiple checkpoints
  can be processed together to amortize the fixed recursion cost~$r$.
  However, this sacrifices the ability to query intermediate checkpoints.
  Whether this is an issue depends on the use case and rate of
  checkpoint production.
  For example, the Sui blockchain produces four checkpoints per second,
  so we believe
  batching could be a reasonable approach.
  We explore the performance benefit of batching concretely in \Cref{sec:eval}.

  Another benefit of batching is that it allows handling dynamic
  variations in throughput.
  For example, if \minnowMTA{}'s circuits can handle $N$ updates, and
  if two consecutive checkpoints have $1.5N$ and $0.5N$ updates,
  then processing the two checkpoints together means that the ZK service
  just needs to process two bundles instead of three.
\fi

\paragraph{Limitations.}
The main limitation of \minnowMTA{} is its sequential dependency, as each proof generation depends on the previous proof as an input and must complete before the next bundle of updates arrives. If proving a bundle takes $t^A$ and the blockchain produces a bundle every $\bundletime$ seconds on average, the system requires $t^A \leq \bundletime$; otherwise latency grows rapidly and the system stalls (\Cref{fig:minnow-protocols}). Our evaluation in \Cref{sec:eval} shows that \minnowMTA{} can keep up with roughly 290 updates per second for a Merkle tree of size $2^{30}$.

\subsection{Our main protocol: \minnowMTB{}}
\label{sec:minnow-mt-b}

\begin{figure}[t]
    \centering
\begin{tikzpicture}[
    node distance=1cm and 1.2cm,
    >=Stealth,
    prover/.style={
        ellipse, draw, minimum width=1.3cm, minimum height=0.7cm, inner sep=1pt
    },
    every node/.style={
            font=\scriptsize
    }]

\foreach \x in {0,...,3} {
    \node[rectangle, draw, minimum width=1.05cm, minimum height=0.6cm, xshift=1.15*\x cm] (SubA\x) {SB \x};
}

\draw[dotted, thick] ([yshift=0.15cm,xshift=-0.12cm]SubA0.north west) rectangle ([yshift=-0.15cm,xshift=0.12cm]SubA3.south east);
\node[above=0.1cm of SubA1.north east, anchor=south] {Bundle $i$};

\foreach \x in {0,...,3} {
    \node[rectangle, draw, minimum width=1.05cm, minimum height=0.6cm, right=1.2cm of SubA3, xshift=1.15*\x cm] (SubB\x) {SB \the\numexpr4+\x\relax};
}

\draw[dotted, thick] ([yshift=0.15cm,xshift=-0.12cm]SubB0.north west) rectangle ([yshift=-0.15cm,xshift=0.12cm]SubB3.south east);
\node[above=0.1cm of SubB1.north east, anchor=south] {Bundle $i+1$};

\foreach \x in {0,...,3} {
    \node[ellipse, draw, minimum width=0.95cm, minimum height=0.6cm, inner sep=1pt, below=0.55cm of SubA\x] (ProveA\x) {$\parallelphasestatement$};
    \draw[->] (SubA\x.south) -- (ProveA\x.north);
}

\foreach \x in {0,...,3} {
    \node[ellipse, draw, minimum width=0.95cm, minimum height=0.6cm, inner sep=1pt, below=0.55cm of SubB\x] (ProveB\x) {$\parallelphasestatement$};
    \draw[->] (SubB\x.south) -- (ProveB\x.north);
}

\node[prover] (AggA0) at ($(ProveA0)!0.5!(ProveA1) + (0,-1.35)$) {$\intermediatephasestatement^0$};
\node[prover] (AggA1) at ($(ProveA2)!0.5!(ProveA3) + (0,-1.35)$) {$\intermediatephasestatement^0$};

\draw[->] (ProveA0.south) -- node[left] {$\parallelphaseproof{0}$}  (AggA0.north);
\draw[->] (ProveA1.south) -- node[right] {$\parallelphaseproof{1}$} (AggA0.north);
\draw[->] (ProveA2.south) -- node[left] {$\parallelphaseproof{2}$}  (AggA1.north);
\draw[->] (ProveA3.south) -- node[right] {$\parallelphaseproof{3}$} (AggA1.north);

\node[prover] (AggB1) at ($(ProveB2)!0.5!(ProveB3) + (0,-1.35)$) {$\intermediatephasestatement^0$};
\node[prover] (AggB0) at ($(ProveB0)!0.5!(ProveB1) + (0,-1.35)$) {$\intermediatephasestatement^0$};

\draw[->] (ProveB0.south) --  node[left] {$\parallelphaseproof{4}$}   (AggB0.north);
\draw[->] (ProveB1.south) --  node[right] {$\parallelphaseproof{5}$}   (AggB0.north);
\draw[->] (ProveB2.south) --  node[left] {$\parallelphaseproof{6}$}   (AggB1.north);
\draw[->] (ProveB3.south) --  node[right] {$\parallelphaseproof{7}$}   (AggB1.north);

\node[prover] (FinalAggA) at ($(AggA0)!0.5!(AggA1) + (0,-1.35)$) {$\intermediatephasestatement^1$};
\draw[->] (AggA0.south) -- node[left] {$\intermediatephaseproof{0}{1}$}  (FinalAggA.north);
\draw[->] (AggA1.south) -- node[right] {$\intermediatephaseproof{2}{3}$} (FinalAggA.north);

\node[prover] (FinalAggB)  at ($(AggB0)!0.5!(AggB1) + (0,-1.35)$) {$\intermediatephasestatement^1$};
\draw[->] (AggB0.south) -- node[left] {$\intermediatephaseproof{4}{5}$} (FinalAggB.north);
\draw[->] (AggB1.south) -- node[right] {$\intermediatephaseproof{6}{7}$} (FinalAggB.north);

\node[prover, below=0.55cm of FinalAggA] (SeqA) {$\sequentialphasestatement$};
\node[prover, below=0.55cm of FinalAggB] (SeqB) {$\sequentialphasestatement$};

\draw[->] (FinalAggA.south) -- node[right] {$\intermediatephaseproof{0}{3}$} (SeqA.north);
\draw[->] (SeqA.east) -- node[above]{$\sequentialphaseproof{i}$} (SeqB.west);
\draw[->] (FinalAggB.south) -- node[right] {$\intermediatephaseproof{4}{7}$} (SeqB.north);
\draw[->] (SeqB.east) -- node[above]{$\sequentialphaseproof{i+1}$} ++(1.4cm,0);

\end{tikzpicture}
\caption{\minnowMTB{} with $\numsubbundles{}=4$ base proofs and an aggregation factor of $\aggregationfactor{}=2$. Processing a bundle happens in three phases: parallel ($\parallelphasestatement$), aggregation ($\intermediatephasestatement$), and sequential ($\sequentialphasestatement$).}
    \label{fig:minnow-b}
\end{figure}

We now present \minnowMTB{}, an improved protocol that uses the same validator commitment function $\minnowhash$ as \minnowMTA{} but splits its monolithic statement into smaller components that can be proven in parallel to sustain arbitrarily high throughput. Every invocation of \minnowMTB{} processes a bundle, which may correspond to updates from one or several checkpoints and is subdivided into $\numsubbundles{}$ sub-bundles that act as the units of parallel processing. Each sub-bundle contains $\numbatchespersubbundle$ batches of $\batchsize$ updates each, matching the batch size validators use. Assuming for simplicity that all updates produced by checkpoint $\cp$ fit into one bundle where $\numchainupdates{\cp} \le \bundlesize$, \minnowMTB{} executes in three phases depicted in \Cref{fig:minnow-b}. The parallel phase first processes sub-bundles independently with a base circuit to update the Merkle root from $\globaldigest{\cp-1}$ to $\globaldigest{\cp}$, yielding $\numsubbundles{}$ base proofs. The aggregation phase then aggregates these base proofs into a single proof using a tree of aggregation circuits. Finally, the sequential phase uses a cyclic circuit to verify its own previous proof together with the aggregated proof for checkpoint $\cp$, producing one final proof per checkpoint.

\paragraph{Parameters.}
The protocol defines the tree depth $\treedepth$, the batch size $\batchsize$ representing updates per batch, and the sub-bundle size $\subbundlesize$ where $\subbundlesize \bmod \batchsize = 0$ so the number of batches per sub-bundle is $\numbatchespersubbundle = \subbundlesize/\batchsize$. It also defines the bundle size $\bundlesize$ where $\bundlesize \bmod \subbundlesize = 0$ so the number of sub-bundles per bundle is $\numsubbundles{} = \bundlesize / \subbundlesize$, and the aggregation factor $\aggregationfactor$ which determines that the proof tree has $\aggregationlevels = \log_{\aggregationfactor}(\numsubbundles{})$ levels.

\paragraph{1) Parallel phase.}
This phase chunks the bundle into $\numsubbundles{}$ sub-bundles and runs the base prover on each in parallel to prove the statement $\parallelphasestatement$, asserting that the sub-bundle transforms the state $\{\globaldigest{\sf old}, \localdigest{\sf old}, \numleaves{\sf old}\}$ into $\{\globaldigest{\sf new}, \localdigest{\sf new}, \numleaves{\sf new}\}$. The circuit closely mirrors \minnowMTA{}, with the key differences that it does not recursively verify a prior proof and it exposes both the old and new states as public inputs. Assuming sufficient machines, this phase outputs $\numsubbundles{}$ base proofs $\parallelphaseproof{i}$ while completing in the time required to generate one base proof.

\paragraph{2) Aggregation phase.}
The aggregation phase combines the $\numsubbundles{}$ base proofs into a single proof using an aggregation tree. Each level-$i$ aggregation circuit takes $\aggregationfactor$ proofs from the level below (base proofs at level 0), verifies them, checks that the outputs of one match the inputs of the next, and emits a combined proof. The public inputs of each aggregation circuit match those of $\parallelphasestatement$. For example, if $\numsubbundles{} = 4$ and $\aggregationfactor = 2$, the system uses two level-0 aggregations that each combine two base proofs, followed by one level-1 aggregation combining those two results as shown in \Cref{fig:minnow-b}. This phase is fully parallelizable, requiring $\numsubbundles{}/\aggregationfactor^{i+1}$ machines at level $i$.

\paragraph{3) Sequential phase.}
The cyclic circuit ties the system together by taking the last sequential proof $\sequentialphaseproof{\cp-1}$ for checkpoint $\cp-1$ and the aggregated proof $\intermediatephaseproof{\cp-1}{\cp}$ for the current checkpoint bundle, verifying both proofs, and matching their public inputs. Its statement $\sequentialphasestatement$, detailed in \Cref{app:cyclic}, outputs the next sequential proof $\sequentialphaseproof{\cp}$ to serve as the final proof for checkpoint $\cp$. The bottom half of \Cref{fig:minnow-protocols} illustrates the special case where $\numsubbundles{} = 1$ and $\aggregationfactor = 1$, whereas \Cref{fig:minnow-b} shows a more complex instance with $\numsubbundles{} = 4$ and $\aggregationfactor = 2$.

\paragraph{Several checkpoints per bundle.}
While the description above assumed \minnowMTB{} processes a single checkpoint per run, the system can process multiple checkpoints at once by choosing a fixed time slot $\bundletime$ and treating all checkpoints occurring in that slot as one bundle. As long as the updates produced in a slot remain below $\bundlesize$, \minnowMTB{} operates normally and only requires that the constant runtime of a single sequential phase invocation remains smaller than $\bundletime$. Because we control the slot length and the sequential phase takes roughly $0.63$\,s as shown in \Cref{sec:eval}, any choice of $\bundletime \ge 0.63$\,s avoids throughput caps. Fixing the batch size $\batchsize$, sub-bundle size $\subbundlesize$, and aggregation factor $\aggregationfactor$ to support a target throughput $\throughput$ updates per second leaves only the bundle size $\bundlesize$, which we set to $\bundlesize = \subbundlesize \cdot \lceil \throughput \cdot \bundletime / \subbundlesize \rceil$ so that it is a multiple of $\subbundlesize$.

\paragraph{Latency.}
Let $\parallelphasetime$, $\intermediatephasetime$, and $\sequentialphasetime$ denote the runtimes of the parallel, aggregation, and sequential phases respectively. The parallel phase $\parallelphasetime = O(\subbundlesize)$ behaves as $O(1)$ since all sub-bundles are proved in parallel, the aggregation phase $\intermediatephasetime = O(\aggregationlevels)$ performs $O(1)$ work across $\aggregationlevels = \log_{\aggregationfactor}(\numsubbundles{})$ levels, and the sequential phase $\sequentialphasetime = O(1)$. Thus the end-to-end latency is
\[
  \latency
  = \parallelphasetime
  + \intermediatephasetime
  + \sequentialphasetime
  = O(\log_{\aggregationfactor}(\numsubbundles{})).
\]
Since $\numsubbundles{} = \lceil \bundlesize / \subbundlesize \rceil = \lceil \throughput \cdot \bundletime / \subbundlesize \rceil$, the overall latency is $\latency = O(\log_{\aggregationfactor}(\throughput))$, logarithmic in throughput. The number of machines \minnowMTB{} needs to keep pace with the chain follows from the three phase runtimes (\Cref{app:machines}).

\section{Implementation and Evaluation}
\label{sec:eval}

We evaluate the \sysname{} protocols using a Rust implementation. It includes validator-side commitment logic, Merkle-tree maintenance, and ZK circuits for \minnowMTA{} and \minnowMTB{} built on the Plonky2 proving system~\cite{plonky2} with its default recursion configuration (codeword rate $1/8$, no zero-knowledge), as well as orchestration and networking code for end-to-end evaluation on a multi-machine testbed. In total, we have written 10.5k lines of code (4k for circuits/logic, 6.5k for orchestration), and we are open-sourcing the entire codebase to enable full reproducibility of our experiments.\footnote{\codelink} All benchmarks use AWS \texttt{m6a.8xlarge} instances within a single data center, where each instance provides 12.5\,Gbps network bandwidth, 32~vCPUs (16 physical cores) powered by a 3.6\,GHz AMD EPYC~7R13 processor, 128\,GB RAM, and Ubuntu~22.04.

\newcommand{\provingtime}[1]{t^{\sf prove}_{#1}}

\paragraph{Plonky2 prover.}
We begin by benchmarking the Plonky2 prover on circuits with increasing gate counts. A circuit with $g$ gates has polynomial degree $d =\lceil \log_2(g) \rceil$, and the corresponding proving time $\provingtime{d}$ is shown in \Cref{tab:proving-times}; for example, $\provingtime{14} = 0.627$\,s.

\begin{table}[t]
  \begin{minipage}[t]{0.54\textwidth}
    \centering
    \small
    \begin{tabular}{lrrrrr}
      \toprule
      Degree $d$                         & 13   & 14   & 15   & 16   & 17   \\
      \midrule
      Proving time $\provingtime{d}$ (s) & 0.32 & 0.63 & 1.29 & 2.77 & 5.96 \\
      \bottomrule
    \end{tabular}
    \caption{Plonky2 proving time by circuit degree ($g$ gates give degree $\lceil \log_2 g \rceil$).}
    \label{tab:proving-times}
  \end{minipage}\hfill
  \begin{minipage}[t]{0.43\textwidth}
    \centering
    \small
    \begin{tabular}{lrrrr}
      \toprule
                                 & \multicolumn{4}{c}{Degree}                     \\
      \cmidrule(lr){2-5}
      Streams ($\maxnumstreams$) & 14                         & 15  & 16   & 17   \\
      \midrule
      $2^{10}$                   & 300                        & 850 & 1,950 & 4,100 \\
      $2^{20}$                   & 200                        & 500 & 1,150 & 2,450 \\
      $2^{30}$                   & 100                        & 350 & 800  & 1,750 \\
      \bottomrule
    \end{tabular}
    \caption{Max updates per \minnowMTA{} proof at Plonky2 degrees 14--17 (batch size 50).}
    \label{tab:num-updates-A}
  \end{minipage}
\end{table}

\paragraph{Data collection.}
We analyze one month of Sui blockchain activity (May 25 to June 25, 2025) to estimate the throughput \sysname{} must sustain. We measure the average number of updates per second, our baseline, and the maximum number of updates in a single checkpoint, which multiplied by Sui's average checkpoint rate (4.12 checkpoints/s) gives a worst-case update rate. The events stream averages 86.18 events emitted per second with a worst-case $1{,}107$ events in a single checkpoint, corresponding to $1{,}107 \times 4.12 = 4{,}560.84$\,updates/s (peak observed at checkpoint~146264685). The objects stream averages 336.09 object updates per second with a worst-case $2{,}114$ objects updated in a single checkpoint, corresponding to $2{,}114 \times 4.12 = 8{,}709.68$\,updates/s (peak observed at checkpoint~144662971). These measurements suggest that \sysname{} must handle several thousand stream updates per second to support modern high-throughput chains.

\subsection{Validator overhead}
\label{sec:validator-overhead}
We measure the overhead that \sysname{} introduces on validators. Validators compute the \sysname{} commitment each checkpoint by chunking the $\numchainupdates{\cp}$ stream updates into batches of size~$\batchsize$, hashing each batch in parallel, and folding these batch hashes sequentially into the running commitment $\localdigest{\cp}$ (\Cref{sec:validator-commitment}). Only this folding step is sequential, so the batch size~$\batchsize$ controls the degree of parallelism. \Cref{fig:eval-micro}(a) shows the end-to-end time to compute $\localdigest{\cp}$ for $\numchainupdates{\cp} \in \{2{,}048, 4{,}096, 8{,}192\}$ using four CPU cores. A batch size of $\batchsize = 50$ strikes a good balance, keeping the overhead below 10\,ms and ensuring a negligible validator-side cost for \sysname{}. Since the largest checkpoint in our one-month Sui measurement carried 2,114 updates, a quarter of the 8,192 measured here, the per-checkpoint commitment cost is a small fraction of Sui's 240\,ms checkpoint interval. We thus fix $\batchsize = 50$ for all subsequent experiments, noting that the batch size barely affects circuit size: for a base circuit processing 1,000 updates, batch sizes of 10, 100, and 1,000 give 50,000, 49,890, and 49,876 gates (\Cref{tab:batch-size-experiments}).

\begin{figure}[t]
  \centering
  \includegraphics[width=\textwidth]{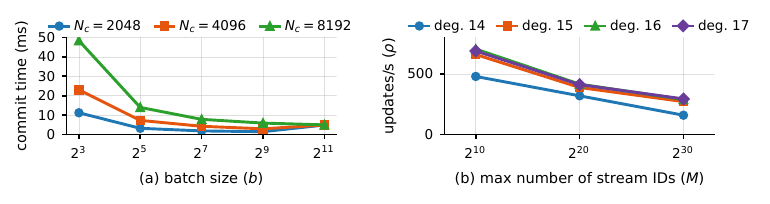}
  \caption{(a) Time taken to compute the \sysname{} commitment using 4~cores
    for varying batch sizes and update counts.
    (b) Amortized throughput of \minnowMTA{} for different maximum stream
    counts $\maxnumstreams$. Each curve corresponds to a different proving
    interval, equal to the Plonky2 proving time at circuit degrees~14--17
    (see \Cref{tab:proving-times}).}
  \label{fig:eval-micro}
\end{figure}

\subsection{\minnowMTA{}}
\label{sec:minnow-a-eval}
We evaluate the performance of \minnowMTA{}, where each circuit invocation processes a fixed number of stream updates~$\numcircuitupdates$ and performs one recursive proof verification. For a given maximum number of stream identifiers~$\maxnumstreams$ (which fixes the Merkle-tree depth $\treedepth = \log_2(\maxnumstreams)$), increasing $\numcircuitupdates$ increases the circuit size and thus its Plonky2 proving time. We consider three choices for~$\maxnumstreams$, namely $\maxnumstreams = 2^{10}, 2^{20}, 2^{30}$, and for each $\maxnumstreams$ and each Plonky2 degree $d \in \{14, 15, 16, 17\}$, we determine the largest $\numcircuitupdates$ such that the total number of gates still fits within the corresponding degree bound. We compile the circuit at increments of fifty updates and record the largest $\numcircuitupdates$ before the gate count crosses a power-of-two threshold (\Cref{tab:num-updates-A}).

Since one \minnowMTA{} proof is produced every proving interval $\provingtime{d}$ (for degree~$d$), the amortized throughput at that configuration is $\throughput \approx \frac{\numcircuitupdates}{\provingtime{d}}\;\text{updates/s}$. \Cref{fig:eval-micro}(b) plots this throughput for each $\maxnumstreams$ and degree $d \in \{14, 15, 16, 17\}$ using the measured Plonky2 proving times from \Cref{tab:proving-times}. When $\maxnumstreams = 2^{30}$, \minnowMTA{} can handle $100$ updates per proof yielding $\approx 159$\,updates/s at degree~14, $350$ updates per proof yielding $\approx 271$\,updates/s at degree~15, $800$ updates per proof yielding $\approx 289$\,updates/s at degree~16, and $1{,}750$ updates per proof yielding $\approx 294$\,updates/s at degree~17. Two opposing effects shape these curves: batching more updates into a single proof improves amortization of the recursive-verification cost (\Cref{sec:minnow-mt-a}), but Plonky2 proving time grows super-linearly with circuit size, meaning overly large circuits reduce throughput. The first effect is visible for $\maxnumstreams = 2^{30}$, where increasing the proving interval monotonically raises throughput. The second dominates for $\maxnumstreams \in \{2^{10}, 2^{20}\}$, where throughput peaks at an intermediate degree (16) and additional batching yields negative returns. For~$\maxnumstreams = 2^{30}$, the best performance is achieved by processing all the checkpoints arriving in a window of 5.96\,s at once, and \minnowMTA{} then processes about 294 updates per second; in practice one may prefer a shorter proving interval to answer queries at finer time granularity. These numbers motivate \minnowMTB{} for high-throughput chains emitting thousands of updates per second. Plonky2 proof sizes for \minnowMTA{} range between 120 and 150\,KB depending on the number of gates; at $\maxnumstreams = 2^{30}$ and $\numcircuitupdates = 800$ (degree~16), the proof is about 141.48\,KB. A full \sysname{} proof additionally includes the checkpoint header (roughly 300\,B on Sui) and a Merkle path, for a total of around 150\,KB.

\subsection{\minnowMTB{}}
\label{sec:minnow-b-eval}
\paragraph{Microbenchmarks.}
\label{sec:minnow-b-microbenchmarks}
We configure \minnowMTB{} with a fixed maximum number of streams~$\maxnumstreams = 2^{30}$. The parallel circuit~$\parallelphasestatement$ is structurally identical to the \minnowMTA{} circuit except that it omits recursive proof verification, reducing its gate count and allowing it to process more updates per invocation: 200, 450, 900, and 1,900 updates at degrees 14 to 17 (\Cref{tab:num-updates-par}). Each aggregation circuit verifies $\aggregationfactor$ proofs and consists of approximately $4{,}500 \cdot \aggregationfactor$ gates. The cyclic circuit verifies both the previous cyclic proof and the final aggregated proof; its roughly $10$k gates give it Plonky2 degree~14 and proving time $\sequentialphasetime = 0.63$\,s (\Cref{tab:proving-times}). The key design requirement is that the bundle interval exceeds this cyclic proving time, $\bundletime \ge \sequentialphasetime = 0.63$\,s.

\paragraph{Special configuration.}
For concreteness, we analyze a configuration with
$\bundletime = 0.63$\,s, $\batchsize = 50$, $\subbundlesize = 200$, and
$\aggregationfactor = 3$. With these choices, both the parallel and aggregation circuits fit comfortably within degree~14, so their proving times also equal~0.63\,s. If a bundle contains $\beta$ sub-bundles ($\beta = \bundlesize / y$), then the aggregation tree has \(\aggregationlevels = \log_3(\beta)\) levels. Thus, the end-to-end latency is $\latency = \parallelphasetime + \aggregationlevels \cdot \intermediatephasetime + \sequentialphasetime = 0.63 \cdot (2 + \aggregationlevels)$\,s. Bundles of up to 200, 600, 1,800, and 5,400 updates need 0, 1, 2, and 3 aggregation levels, giving predicted latencies of 1.26, 1.89, 2.52, and 3.15\,s at 317, 952, 2,857, and 8,571\,updates/s on 2, 5, 14, and 41 machines. Latency grows logarithmically in the bundle size while throughput grows linearly, the main scalability advantage of \minnowMTB{}; \Cref{tab:latency-throughput} tabulates these values and \Cref{app:machines} derives the machine counts.

\begin{figure}[t]
  \centering
  \includegraphics[width=\textwidth]{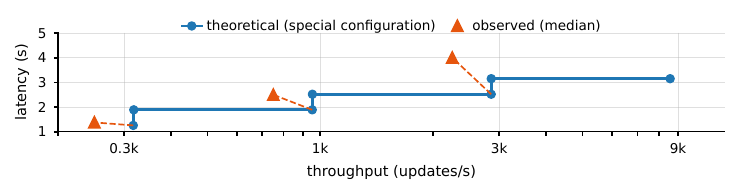}
  \caption{Theoretical (blue) and observed (red) median latency vs.\ throughput
    for \minnowMTB{}. The blue step curve corresponds to the special
    configuration
    in \Cref{tab:latency-throughput}; each plateau reflects a fixed aggregation
    level, and each step corresponds to adding a new aggregation level.}
  \label{fig:minnow-latency-throughput}
\end{figure}

\paragraph{End-to-end evaluation.}
\label{sec:end-to-end-evaluation}
We evaluate the end-to-end throughput and latency of \minnowMTB{} on a distributed testbed of up to 15 AWS instances (one primary, up to 13 workers, and one client) to assess its performance in a realistic deployment setting. For these experiments, we fix the bundle interval to $\bundletime = 0.8$\,s, slightly larger than the theoretical minimum of 0.63\,s from \Cref{sec:minnow-b-microbenchmarks}, to provide some breathing room for scheduling jitter and network variability. The implementation uses the tokio runtime~\cite{tokio} for asynchronous networking and scheduling CPU-intensive tasks, as well as raw TCP sockets for inter-machine communication. The system architecture consists of a single \emph{primary} machine and a configurable set of \emph{worker} machines, where the primary receives bundles of state updates, constructs the corresponding witnesses, and dispatches them to the workers. Each worker computes both base and aggregation proofs, and once all intermediate proofs are returned, the primary computes the final cyclic proof, returns it to the client, and orchestrates all client communications. Operators expose only the primary and scale by adding workers.

The red triangles in \Cref{fig:minnow-latency-throughput} show the observed latency and throughput of \minnowMTB{} for each bundle size. Each point is the median over a 3-minute run; latency runs from a client's submission to the final cyclic proof.
For a bundle size of 200 updates, the median latency is 1.4\,s (p90: 1.5\,s). Increasing the bundle size to 600 updates raises it to 2.5\,s (p90: 2.9\,s), and at 1,800 updates it reaches 4\,s (p90: 4.9\,s). The corresponding observed throughputs are 250, 750, and 2,250\,updates/s (one bundle per 0.8\,s). The number of machines used for each bundle size is exactly as predicted by the theoretical analysis, and the measurements confirm that the end-to-end latency of \minnowMTB{} grows sub-linearly with the bundle size as expected.

The remaining gap between theoretical and observed latency stems primarily from networking overhead, context switching on the primary, queuing delays during bundle preparation on clients, and the time required to construct witnesses. The per-prover breakdown in \Cref{fig:latency-breakdown} confirms this (\Cref{sec:breakdown} discusses it): the base, aggregation, and cyclic provers each take a median of about 0.63\,s regardless of the bundle size, while witness construction on the primary grows from 0.05\,s at 200 updates to 0.18\,s at 1{,}800. The small throughput gap between the theoretical and observed curves is fully explained by our choice of a slightly larger bundle interval, since we send a bundle every 0.8\,s in the experiments, whereas the idealized analysis earlier assumes $\bundletime = 0.63$\,s.

\begin{figure}[t]
  \centering
  \includegraphics[width=\textwidth]{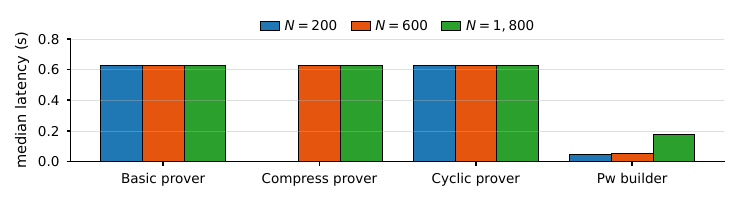}
  \caption{Latency breakdown of \minnowMTB{} by prover type, for bundle
    sizes of $N = 200$, $600$, and $1{,}800$ (1, 4, and 13 workers).}
  \label{fig:latency-breakdown}
\end{figure}

\section{Related work}

Prior light-client designs assume that validators publish, in every block header, a commitment to the full state; to our knowledge \sysname{} is the first to give light clients completeness on chains whose validators commit only to the updates of each checkpoint.

\paragraph{Light clients and state proofs.}
Header-chain compressions such as NIPoPoWs~\cite{nipopow}, FlyClient~\cite{flyclient}, and Blink~\cite{blink} for proof of work, and PoPoS~\cite{popos} for proof of stake, reduce downloads. Plumo~\cite{plumo} and zkBridge~\cite{zkbridge} go further with SNARK proofs of consensus, the latter proving sync-committee BLS signatures inside a distributed SNARK. Sync-committee clients include Altair~\cite{altair-light-client} and Helios~\cite{helios}, and Chatzigiannis et al.~\cite{chatzigiannis2022sok} survey the space. All of these verify headers and then read state through the state root in the header, which Sui and Solana do not publish. Other designs relax trust in different ways: fraud and data-availability proofs~\cite{al2018fraud} and light clients for lazy blockchains~\cite{lazyclient} need an honest full node, and BITE~\cite{matetic2019bite} and ZLiTE~\cite{wuest2019zlite} rely on trusted hardware with known limitations~\cite{nilsson2020survey}. \sysname{} trusts only the validator set. Mina~\cite{bonneau2020mina,cryptoeprint:2020/1522} makes validators themselves run incrementally verifiable computation, so proving is on-chain and heavy; \sysname{} keeps validator overhead to one header field. Sunfish~\cite{sunfish} is the closest work and the source of the validator-side hash-chain commitment. Its sparse client downloads and re-executes only the transactions touching a chosen sub-state and receives a succinct completeness proof that none was omitted. The value of a sub-state is thus obtained by re-execution, the cost grows with that sub-state's traffic, and there is no succinct inclusion proof for an arbitrary object at an arbitrary checkpoint. \sysname{} adds the untrusted ZK service that turns the same commitment into full-state, per-checkpoint inclusion proofs.

\paragraph{State commitments and stateless clients.}
Chains paying for a per-block state root use Merkle-Patricia tries (Ethereum~\cite{ethereum}) or Jellyfish Merkle trees~\cite{jellyfish} (Aptos~\cite{aptos}); LVMT~\cite{lvmt} shows this authenticated storage is the execution bottleneck. Verkle trees~\cite{verkle-buterin}, aggregatable subvector commitments~\cite{asvc}, and Hyperproofs~\cite{hyperproofs} shrink proofs or make them maintainable, but the commitment stays on the validators' critical path. High-throughput chains drop it: Sui checkpoints carry transaction and effects digests but no state root, and the committee commits to the live object set only at epoch end~\cite{sui-checkpoint-verification,sui-snapshots}. Solana commits per block only to the accounts written in that block, and its lattice-hash successor supports no inclusion proofs~\cite{solana-simd-0215}. That is the gap \sysname{} fills, without putting the commitment back on validators. Reckle trees~\cite{PSG+24} store recursive proofs at Merkle nodes to obtain succinct, updatable batch-membership and map-reduce proofs against a given root; they assume a trusted full-state digest such as Ethereum's header root and do not produce one. \sysname{} is complementary: its service could maintain a Reckle tree over the same leaves, adding succinct batched reads for standing queries on top of the per-checkpoint root proof.

\paragraph{ZK co-processors and verifiable indexers.}
Co-processors such as Axiom~\cite{axiom}, Brevis~\cite{brevis}, and Lagrange~\cite{lagrange-coprocessor} prove reads of historical state and computations over it, and zkVMs such as SP1~\cite{sp1} and RISC Zero~\cite{risc0} let developers build such indexers generically; validity rollups prove the execution of every transaction and post a state root~\cite{zk-rollups-bench}. All of them consume a state root that a header already provides. \sysname{} produces that root for chains that lack one, and proves only the tree update from validator-signed effects rather than re-executing transactions.

\paragraph{Recursion and distributed proving.}
Incrementally verifiable computation~\cite{valiant2008incrementally} and proof-carrying data~\cite{bitansky2013} underlie \sysname{}; recursion was long expensive~\cite{recursion-cycles}, and Halo~\cite{halo} and Plonky2~\cite{plonky2} made it practical. We use Plonky2 for its FRI-based, small-field recursion and its available implementation. Folding schemes such as Nova~\cite{nova}, HyperNova~\cite{hypernova}, and Protostar~\cite{protostar} are an alternative when a proof at every checkpoint is not required; we leave exploring them to future work. Mangrove~\cite{mangrove} obtains tree-shaped proof-carrying data with parallel proving in the folding setting, and zkTree~\cite{zktree} builds a recursion tree of Plonky2 proofs; both share the shape of \minnowMTB{}'s aggregation tree. Distributed provers such as DIZK~\cite{dizk}, Pianist~\cite{pianist}, and Hekaton~\cite{hekaton}, and aggregation schemes such as SnarkPack~\cite{snarkpack}, scale one large statement across machines. \minnowMTB{} instead exploits that its workload is naturally a tree of independent Merkle-tree updates, so it needs only small circuits and off-the-shelf recursion.

\ifextended
    
\section{Conclusion}

We introduced \sysname{}, a practical protocol for supporting
efficient light clients with completeness on high-throughput
blockchains. By shifting state maintenance to an off-chain ZK service and
requiring validators to commit only to per-checkpoint state updates,
\sysname{} avoids the heavy costs of maintaining global state
commitments on-chain. Our design combines hash-chain commitments,
recursive proofs, and a parallelizable proving pipeline, culminating
in \minnowMTB{}, which achieves logarithmic latency growth and
sustains arbitrarily high throughput given sufficient parallelism.

Our prototype implementation using Plonky2 shows that \sysname{} can
process thousands of state updates per second with only a few seconds
of end-to-end latency, while imposing negligible overhead on
validators. This demonstrates that completeness for light clients is
compatible with the performance requirements of modern blockchains.

Looking ahead, \sysname{} opens several directions for future work, three
of which we discuss below; others include additional query types such as
non-inclusion proofs or per-stream proofs, and integrating state-of-the-art
ZKPs~\cite{plonky3} to further reduce latency.
More broadly, the techniques developed here may benefit other
incrementally-verifiable computations that require both high
throughput and low latency.

\paragraph{Handling varying input rates.}
Because a blockchain emits a varying rate of stream updates while circuits process a fixed rate, both \minnowMTA{} and \minnowMTB{} handle this mismatch by setting a maximum bundle size. If the chain's average throughput grows beyond this limit, the system transitions to a new set of circuits supporting a larger bound, using the new sequential circuit to verify the previous sequential proof as its base case for smooth migration. This strategy allows the system to scale gradually with blockchain usage rather than requiring a large parameter choice from the start.

Temporary spikes are easily absorbed: if one slot produces more than
$\bundlesize$ updates but the next slot produces fewer, the excess can
be split into multiple bundles while maintaining overall latency
through batching.

A drawback of fixed-size circuits is that the ZK service pays for
$\bundlesize$ updates per slot even if fewer updates occur.
One solution is to design a cyclic circuit capable of verifying aggregation
circuits of multiple bundle sizes (e.g.,
$\bundlesize \in \{5{,}000, 10{,}000, 15{,}000\}$), allowing the system to choose the
best parameter on a per-slot basis.
We leave it to future work to investigate the efficiency of this approach.

\paragraph{Tracking a fixed set of streams.}
A useful variant of \sysname{} arises when the service tracks only a small fixed set of stream identifiers, which reduces proving cost while still providing completeness for selected streams.

In this case, we replace the Merkle-tree commitment used by
\minnowMTA{} and \minnowMTB{} with a simple hash over the tracked
states. If the service tracks a fixed set
$\streamidset = \{\streamid_1, \ldots, \streamid_m\}$, then its global
commitment is
$  \globaldigest{\cp}
= H\bigl([(\streamid_i,\streamstate_i)]_{i=1}^m\bigr).
$
As before, the service processes the checkpoint updates to detect
changes to any tracked stream and recompute the commitment.
We explore this variant further in \Cref{sec:minnow-h}.

\paragraph{Reducing validator work further.}
Depending on the stream type, supporting \sysname{} may require validators to maintain an additional key-value map. Supporting object streams on Sui requires no modification since validators already store the mapping from object IDs to their latest values, whereas supporting event streams requires introducing a new key-value store and addressing operational concerns such as state synchronization for new validators.

A natural question is whether even this overhead can be shifted to the
ZK service. In principle, \sysname{} could be extended so that the ZK
service also maintains per-stream state (e.g., rolling hash chains for
event streams) in addition to maintaining global commitments. Whether
this is practical depends on the cost of updating these structures
inside the ZK circuit, and exploring this trade-off is an interesting
direction for future work.

\fi
\ifpublish
    \paragraph{Acknowledgments.}
This work is partially funded by Mysten Labs. We thank Andrey Chursin and Michael
Corey for help with ideating and data collection respectively.

\fi

\bibliographystyle{splncs04}
\bibliography{references}

@misc{solana,
    title        = {Solana: A new architecture for a high performance blockchain v0.8.13},
    author       = {Yakovenko, Anatoly},
    howpublished = {Whitepaper},
    url          = {https://solana.com/solana-whitepaper.pdf},
    year         = {2018}
}

@misc{aptos,
    title        = {The Aptos Blockchain: Safe, Scalable, and Upgradeable Web3 Infrastructure},
    author       = {Aptos Labs},
    howpublished = {Whitepaper},
    url          = {https://aptosfoundation.org/whitepaper/aptos-whitepaper_en.pdf},
    year         = {2022}
}

@inproceedings{chatzigiannis2022sok,
    title        = {{SoK}: Blockchain Light Clients},
    author       = {Chatzigiannis, Panagiotis and Baldimtsi, Foteini and Chalkias, Konstantinos},
    booktitle    = {International Conference on Financial Cryptography and Data Security},
    pages        = {615--641},
    year         = {2022},
    organization = {Springer}
}

@article{nilsson2020survey,
    title   = {A survey of published attacks on Intel SGX},
    author  = {Nilsson, Alexander and Bideh, Pegah Nikbakht and Brorsson, Joakim},
    journal = {arXiv preprint arXiv:2006.13598},
    year    = {2020}
}

@inproceedings{wuest2019zlite,
    title        = {Zlite: Lightweight clients for shielded zcash transactions using trusted execution},
    author       = {W{\"u}st, Karl and Matetic, Sinisa and Schneider, Moritz and Miers, Ian and Kostiainen, Kari and {\v{C}}apkun, Srdjan},
    booktitle    = {Financial Cryptography and Data Security: 23rd International Conference, FC 2019, Frigate Bay, St. Kitts and Nevis, February 18--22, 2019, Revised Selected Papers 23},
    pages        = {179--198},
    year         = {2019},
    organization = {Springer}
}

@inproceedings{matetic2019bite,
    title     = {{BITE}: Bitcoin lightweight client privacy using trusted execution},
    author    = {Matetic, Sinisa and W{\"u}st, Karl and Schneider, Moritz and Kostiainen, Kari and Karame, Ghassan and Capkun, Srdjan},
    booktitle = {28th USENIX Security Symposium (USENIX Security 19)},
    pages     = {783--800},
    year      = {2019}
}

@inproceedings{popos,
    author    = {Agrawal, Shresth and Neu, Joachim and Tas, Ertem Nusret and Zindros, Dionysis},
    title     = {{Proofs of Proof-Of-Stake with Sublinear Complexity}},
    booktitle = {5th Conference on Advances in Financial Technologies (AFT 2023)},
    year      = {2023},
    publisher = {Schloss Dagstuhl -- Leibniz-Zentrum f{\"u}r Informatik},
    url       = {https://drops.dagstuhl.de/entities/document/10.4230/LIPIcs.AFT.2023.14}
}

@inproceedings{nipopow,
    author    = {Aggelos Kiayias and Andrew Miller and Dionysis Zindros},
    title     = {Non-interactive Proofs of Proof-of-Work},
    booktitle = {Financial Cryptography and Data Security - 24th International Conference, {FC} 2020},
    series    = {Lecture Notes in Computer Science},
    volume    = {12059},
    pages     = {505--522},
    publisher = {Springer},
    year      = {2020},
    doi       = {10.1007/978-3-030-51280-4\_27}
}

@misc{plonky2,
    author       = {Polygon Zero Team},
    title        = {Plonky2: Fast Recursive Arguments with PLONK and FRI},
    year         = {2022},
    howpublished = {\url{https://github.com/0xPolygonZero/plonky2/blob/main/plonky2/plonky2.pdf}},
    note         = {Accessed: 2025-03-13}
}

@inproceedings{valiant2008incrementally,
    title        = {Incrementally verifiable computation or proofs of knowledge imply time/space efficiency},
    author       = {Valiant, Paul},
    booktitle    = {Theory of Cryptography: Fifth Theory of Cryptography Conference, TCC 2008, New York, USA, March 19-21, 2008. Proceedings 5},
    pages        = {1--18},
    year         = {2008},
    organization = {Springer}
}

@inproceedings{recursion-cycles,
    author    = {Ben-Sasson, Eli
                 and Chiesa, Alessandro
                 and Tromer, Eran
                 and Virza, Madars},
    editor    = {Garay, Juan A.
                 and Gennaro, Rosario},
    title     = {Scalable Zero Knowledge via Cycles of Elliptic Curves},
    booktitle = {Advances in Cryptology -- CRYPTO 2014},
    year      = {2014},
    publisher = {Springer Berlin Heidelberg},
    address   = {Berlin, Heidelberg},
    pages     = {276--294},
    isbn      = {978-3-662-44381-1}
}

@inproceedings{bitansky2013,
    author    = {Bitansky, Nir and Canetti, Ran and Chiesa, Alessandro and Tromer, Eran},
    title     = {Recursive composition and bootstrapping for SNARKS and proof-carrying data},
    year      = {2013},
    isbn      = {9781450320290},
    publisher = {Association for Computing Machinery},
    address   = {New York, NY, USA},
    url       = {https://doi.org/10.1145/2488608.2488623},
    doi       = {10.1145/2488608.2488623},
    booktitle = {Proceedings of the Forty-Fifth Annual ACM Symposium on Theory of Computing},
    pages     = {111–120},
    numpages  = {10},
    location  = {Palo Alto, California, USA},
    series    = {STOC '13}
}

@misc{cryptoeprint:2020/1522,
    author       = {Weikeng Chen and Alessandro Chiesa and Emma Dauterman and Nicholas P.  Ward},
    title        = {Reducing Participation Costs via Incremental Verification for Ledger Systems},
    howpublished = {Cryptology {ePrint} Archive, Paper 2020/1522},
    year         = {2020},
    url          = {https://eprint.iacr.org/2020/1522}
}

@inproceedings{flyclient,
    author    = {Bünz, Benedikt and Kiffer, Lucianna and Luu, Loi and Zamani, Mahdi},
    booktitle = {2020 IEEE Symposium on Security and Privacy (SP)},
    title     = {FlyClient: Super-Light Clients for Cryptocurrencies},
    year      = {2020},
    volume    = {},
    number    = {},
    pages     = {928-946},
    doi       = {10.1109/SP40000.2020.00049}
}

@misc{tokio,
    author       = {The Tokio Team},
    title        = {Tokio},
    howpublished = {\url{https://tokio.rs}},
    year         = 2024
}

@misc{halo,
      author = {Sean Bowe and Jack Grigg and Daira Hopwood},
      title = {Recursive Proof Composition without a Trusted Setup},
      howpublished = {Cryptology {ePrint} Archive, Paper 2019/1021},
      year = {2019},
      url = {https://eprint.iacr.org/2019/1021}
}

@misc{plonky3,
      author = {Polygon Zero Team},
      title = {Plonky3 Repository},
      howpublished = {\url{https://github.com/Plonky3/Plonky3}},
      year = {2025}
}

@misc{risc0,
      author = {RISC Zero Team},
      title = {RISC Zero Repository},
      howpublished = {\url{https://github.com/risc0/risc0}},
      year = {2025}
}

@inproceedings{al2018fraud,
  author    = {Mustafa Al{-}Bassam and Alberto Sonnino and Vitalik Buterin and Ismail Khoffi},
  title     = {Fraud and Data Availability Proofs: Detecting Invalid Blocks in Light Clients},
  booktitle = {Financial Cryptography and Data Security - 25th International Conference, {FC} 2021, Part {II}},
  series    = {Lecture Notes in Computer Science},
  volume    = {12675},
  pages     = {279--298},
  publisher = {Springer},
  year      = {2021},
  doi       = {10.1007/978-3-662-64331-0\_15}
}

@inproceedings{plumo,
  author    = {Psi Vesely and Kobi Gurkan and Michael Straka and Ariel Gabizon and Philipp Jovanovic and Georgios Konstantopoulos and Asa Oines and Marek Olszewski and Eran Tromer},
  title     = {Plumo: An Ultralight Blockchain Client},
  booktitle = {Financial Cryptography and Data Security - 26th International Conference, {FC} 2022},
  series    = {Lecture Notes in Computer Science},
  volume    = {13411},
  pages     = {597--614},
  publisher = {Springer},
  year      = {2022},
  doi       = {10.1007/978-3-031-18283-9\_30}
}

@inproceedings{nova,
  author    = {Abhiram Kothapalli and Srinath T. V. Setty and Ioanna Tzialla},
  title     = {Nova: Recursive Zero-Knowledge Arguments from Folding Schemes},
  booktitle = {Advances in Cryptology - {CRYPTO} 2022, Part {IV}},
  series    = {Lecture Notes in Computer Science},
  volume    = {13510},
  pages     = {359--388},
  publisher = {Springer},
  year      = {2022},
  doi       = {10.1007/978-3-031-15985-5\_13}
}

@inproceedings{hypernova,
  author    = {Abhiram Kothapalli and Srinath T. V. Setty},
  title     = {HyperNova: Recursive Arguments for Customizable Constraint Systems},
  booktitle = {Advances in Cryptology - {CRYPTO} 2024, Part {X}},
  series    = {Lecture Notes in Computer Science},
  volume    = {14929},
  pages     = {345--379},
  publisher = {Springer},
  year      = {2024},
  doi       = {10.1007/978-3-031-68403-6\_11}
}

@inproceedings{protostar,
  author    = {Benedikt B{\"{u}}nz and Binyi Chen},
  title     = {Protostar: Generic Efficient Accumulation/Folding for Special-Sound Protocols},
  booktitle = {Advances in Cryptology - {ASIACRYPT} 2023, Part {II}},
  series    = {Lecture Notes in Computer Science},
  volume    = {14439},
  pages     = {77--110},
  publisher = {Springer},
  year      = {2023},
  doi       = {10.1007/978-981-99-8724-5\_3}
}

@inproceedings{lazyclient,
  author    = {Ertem Nusret Tas and David Tse and Lei Yang and Dionysis Zindros},
  title     = {Light Clients for Lazy Blockchains},
  booktitle = {Financial Cryptography and Data Security - 28th International Conference, {FC} 2024, Part {II}},
  series    = {Lecture Notes in Computer Science},
  volume    = {14745},
  pages     = {3--21},
  publisher = {Springer},
  year      = {2025},
  doi       = {10.1007/978-3-031-78679-2\_1}
}

@misc{sunfish,
  author       = {Giulia Scaffino and Philipp Slowak and Karl W{\"u}st and Deepak Maram and Alberto Sonnino and Lefteris Kokoris-Kogias},
  title        = {Sunfish: Reading Ledgers with Sparse Nodes},
  howpublished = {Cryptology {ePrint} Archive, Paper 2024/1680},
  year         = {2024},
  url          = {https://eprint.iacr.org/2024/1680}
}

@inproceedings{PSG+24,
  author    = {Charalampos Papamanthou and Shravan Srinivasan and Nicolas Gailly and Ismael Hishon{-}Rezaizadeh and Andrus Salumets and Stjepan Golemac},
  title     = {Reckle Trees: Updatable Merkle Batch Proofs with Applications},
  booktitle = {Proceedings of the 2024 on {ACM} {SIGSAC} Conference on Computer and Communications Security, {CCS} 2024},
  pages     = {1538--1551},
  publisher = {{ACM}},
  year      = {2024},
  doi       = {10.1145/3658644.3670354}
}

@inproceedings{zkbridge,
  author    = {Tiancheng Xie and Jiaheng Zhang and Zerui Cheng and Fan Zhang and Yupeng Zhang and Yongzheng Jia and Dan Boneh and Dawn Song},
  title     = {zkBridge: Trustless Cross-chain Bridges Made Practical},
  booktitle = {Proceedings of the 2022 {ACM} {SIGSAC} Conference on Computer and Communications Security, {CCS} 2022},
  pages     = {3003--3017},
  publisher = {{ACM}},
  year      = {2022},
  doi       = {10.1145/3548606.3560652}
}

@misc{altair-light-client,
  author       = {{Ethereum Foundation}},
  title        = {Altair Light Client -- Sync Protocol},
  howpublished = {Ethereum consensus specifications},
  url          = {https://ethereum.github.io/consensus-specs/specs/altair/light-client/sync-protocol/},
  note         = {Accessed September 2026}
}

@misc{helios,
  author       = {{a16z crypto}},
  title        = {Helios: A fast, secure, and portable multichain light client for Ethereum},
  howpublished = {\url{https://github.com/a16z/helios}},
  year         = {2022},
  note         = {Accessed September 2026}
}

@misc{jellyfish,
  author       = {Zhenhuan Gao and Yuxuan Hu and Qinfan Wu},
  title        = {Jellyfish Merkle Tree},
  howpublished = {Diem technical paper},
  year         = {2021},
  url          = {https://developers.diem.com/papers/jellyfish-merkle-tree/2021-01-14.pdf}
}

@inproceedings{lvmt,
  author    = {Chenxing Li and Sidi Mohamed Beillahi and Guang Yang and Ming Wu and Wei Xu and Fan Long},
  title     = {{LVMT:} An Efficient Authenticated Storage for Blockchain},
  booktitle = {17th {USENIX} Symposium on Operating Systems Design and Implementation, {OSDI} 2023},
  pages     = {135--153},
  publisher = {{USENIX} Association},
  year      = {2023},
  url       = {https://www.usenix.org/conference/osdi23/presentation/li-chenxing}
}

@inproceedings{asvc,
  author    = {Alin Tomescu and Ittai Abraham and Vitalik Buterin and Justin Drake and Dankrad Feist and Dmitry Khovratovich},
  title     = {Aggregatable Subvector Commitments for Stateless Cryptocurrencies},
  booktitle = {Security and Cryptography for Networks - 12th International Conference, {SCN} 2020},
  series    = {Lecture Notes in Computer Science},
  volume    = {12238},
  pages     = {45--64},
  publisher = {Springer},
  year      = {2020},
  doi       = {10.1007/978-3-030-57990-6\_3}
}

@inproceedings{hyperproofs,
  author    = {Shravan Srinivasan and Alexander Chepurnoy and Charalampos Papamanthou and Alin Tomescu and Yupeng Zhang},
  title     = {Hyperproofs: Aggregating and Maintaining Proofs in Vector Commitments},
  booktitle = {31st {USENIX} Security Symposium, {USENIX} Security 2022},
  pages     = {3001--3018},
  publisher = {{USENIX} Association},
  year      = {2022},
  url       = {https://www.usenix.org/conference/usenixsecurity22/presentation/srinivasan}
}

@misc{sui-checkpoint-verification,
  author       = {{Mysten Labs}},
  title        = {Checkpoint Verification -- Sui Documentation},
  howpublished = {\url{https://docs.sui.io/concepts/cryptography/system/checkpoint-verification}},
  note         = {Accessed September 2026}
}

@misc{solana-simd-0215,
  author       = {{Solana Foundation}},
  title        = {{SIMD-0215}: Homomorphic Hashing of Account State (Accounts Lattice Hash)},
  howpublished = {\url{https://github.com/solana-foundation/solana-improvement-documents/blob/main/proposals/0215-accounts-lattice-hash.md}},
  year         = {2024},
  note         = {Created 2024-12-20; status: activated. Accessed September 2026}
}

@misc{lagrange-coprocessor,
  author       = {{Lagrange Labs}},
  title        = {Lagrange {ZK} Coprocessor and Verifiable Database (Euclid)},
  howpublished = {\url{https://www.lagrange.dev/blog/announcing-testnet-euclid-zk-coprocessor}},
  year         = {2024},
  note         = {Accessed September 2026}
}

@inproceedings{zk-rollups-bench,
  author    = {Stefanos Chaliasos and Itamar Reif and Adri{\`{a}} Torralba{-}Agell and Jens Ernstberger and Assimakis Kattis and Benjamin Livshits},
  title     = {Analyzing and Benchmarking ZK-Rollups},
  booktitle = {6th Conference on Advances in Financial Technologies, {AFT} 2024},
  series    = {LIPIcs},
  volume    = {316},
  pages     = {6:1--6:24},
  publisher = {Schloss Dagstuhl - Leibniz-Zentrum f{\"{u}}r Informatik},
  year      = {2024},
  doi       = {10.4230/LIPIcs.AFT.2024.6}
}

@inproceedings{mangrove,
  author    = {Wilson D. Nguyen and Trisha Datta and Binyi Chen and Nirvan Tyagi and Dan Boneh},
  title     = {Mangrove: {A} Scalable Framework for Folding-Based SNARKs},
  booktitle = {Advances in Cryptology - {CRYPTO} 2024, Part {X}},
  series    = {Lecture Notes in Computer Science},
  volume    = {14929},
  pages     = {308--344},
  publisher = {Springer},
  year      = {2024},
  doi       = {10.1007/978-3-031-68403-6\_10}
}

@misc{zktree,
  author       = {Sai Deng and Bo Du},
  title        = {zkTree: A Zero-Knowledge Recursion Tree with {ZKP} Membership Proofs},
  howpublished = {Cryptology {ePrint} Archive, Paper 2023/208},
  year         = {2023},
  url          = {https://eprint.iacr.org/2023/208}
}

@inproceedings{dizk,
  author    = {Howard Wu and Wenting Zheng and Alessandro Chiesa and Raluca Ada Popa and Ion Stoica},
  title     = {{DIZK:} {A} Distributed Zero Knowledge Proof System},
  booktitle = {27th {USENIX} Security Symposium, {USENIX} Security 2018},
  pages     = {675--692},
  publisher = {{USENIX} Association},
  year      = {2018},
  url       = {https://www.usenix.org/conference/usenixsecurity18/presentation/wu}
}

@inproceedings{pianist,
  author    = {Tianyi Liu and Tiancheng Xie and Jiaheng Zhang and Dawn Song and Yupeng Zhang},
  title     = {Pianist: Scalable zkRollups via Fully Distributed Zero-Knowledge Proofs},
  booktitle = {{IEEE} Symposium on Security and Privacy, {SP} 2024},
  pages     = {1777--1793},
  publisher = {{IEEE}},
  year      = {2024},
  doi       = {10.1109/SP54263.2024.00035}
}

@inproceedings{hekaton,
  author    = {Michael Rosenberg and Tushar Mopuri and Hossein Hafezi and Ian Miers and Pratyush Mishra},
  title     = {Hekaton: Horizontally-Scalable zkSNARKs Via Proof Aggregation},
  booktitle = {Proceedings of the 2024 on {ACM} {SIGSAC} Conference on Computer and Communications Security, {CCS} 2024},
  pages     = {929--940},
  publisher = {{ACM}},
  year      = {2024},
  doi       = {10.1145/3658644.3690282}
}

@inproceedings{snarkpack,
  author    = {Nicolas Gailly and Mary Maller and Anca Nitulescu},
  title     = {SnarkPack: Practical {SNARK} Aggregation},
  booktitle = {Financial Cryptography and Data Security - 26th International Conference, {FC} 2022},
  series    = {Lecture Notes in Computer Science},
  volume    = {13411},
  pages     = {203--229},
  publisher = {Springer},
  year      = {2022},
  doi       = {10.1007/978-3-031-18283-9\_10}
}

@misc{sp1,
  author       = {Succinct Labs},
  title        = {{SP1}: A performant, open-source {zkVM}},
  howpublished = {\url{https://github.com/succinctlabs/sp1}},
  year         = {2025},
  note         = {Accessed September 2026}
}

@misc{verkle-buterin,
  author       = {Vitalik Buterin},
  title        = {Verkle trees},
  howpublished = {\url{https://vitalik.eth.limo/general/2021/06/18/verkle.html}},
  year         = {2021},
  note         = {Blog post, June 18, 2021. Accessed September 2026}
}

@misc{axiom,
  author       = {{Axiom}},
  title        = {Axiom {V2} Developer Docs: App Architecture},
  howpublished = {\url{https://docs.axiom.xyz/docs/axiom-developer-flow/app-architecture}},
  year         = {2024},
  note         = {Archived at \url{https://web.archive.org/web/20240627181800/https://docs.axiom.xyz/docs/axiom-developer-flow/app-architecture}}
}

@misc{brevis,
  author       = {Mo Dong and Qingkai Liang and Xiaozhou Li and Junda Liu},
  title        = {Brevis: An Omnichain {ZK} Data Attestation Platform},
  howpublished = {Whitepaper v1.0, Celer Network},
  year         = {2023},
  url          = {https://get.celer.app/brevis/BrevisWhitePaper_03211833.pdf}
}

@misc{ethereum,
  author       = {Vitalik Buterin},
  title        = {Ethereum: A Next-Generation Smart Contract and Decentralized Application Platform},
  howpublished = {Whitepaper, \url{https://ethereum.org/en/whitepaper/}},
  year         = {2014}
}

@misc{bonneau2020mina,
  author       = {Joseph Bonneau and Izaak Meckler and Vanishree Rao and Evan Shapiro},
  title        = {Mina: Decentralized Cryptocurrency at Scale},
  howpublished = {Whitepaper, O(1) Labs},
  year         = {2020},
  url          = {https://minaprotocol.com/wp-content/uploads/technicalWhitepaper.pdf}
}

@misc{solidity-events,
  title        = {Solidity Documentation 0.8.29: Contracts -- Events},
  author       = {{Solidity Team}},
  howpublished = {\url{https://docs.soliditylang.org/en/v0.8.29/contracts.html#events}},
  year         = {2025}
}

@inproceedings{sui-lutris,
  author    = {Sam Blackshear and Andrey Chursin and George Danezis and Anastasios Kichidis and Lefteris Kokoris-Kogias and Xun Li and Mark Logan and Ashok Menon and Todd Nowacki and Alberto Sonnino and Brandon Williams and Lu Zhang},
  title     = {Sui Lutris: {A} Blockchain Combining Broadcast and Consensus},
  booktitle = {Proceedings of the 2024 on {ACM} {SIGSAC} Conference on Computer and Communications Security, {CCS} 2024},
  pages     = {2606--2620},
  publisher = {{ACM}},
  year      = {2024},
  doi       = {10.1145/3658644.3670286}
}

@inproceedings{blink,
  author    = {Lukas Aumayr and Zeta Avarikioti and Matteo Maffei and Giulia Scaffino and Dionysis Zindros},
  title     = {Blink: An Optimal Proof of Proof-of-Work},
  booktitle = {Financial Cryptography and Data Security - 29th International Conference, {FC} 2025},
  series    = {Lecture Notes in Computer Science},
  volume    = {15752},
  pages     = {173--190},
  publisher = {Springer},
  year      = {2025},
  doi       = {10.1007/978-3-032-07035-7\_11}
}

@misc{sui-snapshots,
  author       = {{Mysten Labs}},
  title        = {Database Snapshots -- Sui Documentation},
  howpublished = {\url{https://docs.sui.io/operators/snapshots}},
  note         = {Accessed September 2026}
}

\appendix
\renewcommand{\theHsection}{\Alph{section}}
\crefalias{section}{appendix}
\crefalias{subsection}{appendix}
\section{Proof of \Cref{thm:minnow-security}}
\label{sec:minnow-a-proof}

This appendix proves \Cref{thm:minnow-security}: if a client accepts
${\sf verifyState}(\streamid, \cp, \streamstate, \pi)$ against a blockchain
$B$, then $(\streamid, \streamstate)$ is correct at checkpoint~$\cp$ in the
sense of \Cref{def:stream-correctness}, under the unforgeability of the
signature scheme, the collision resistance of the hash function, and the
knowledge soundness of the proof system. The proof is a sequence of hybrid
games, each removing one way for the adversary to win.

\begin{proof}
  Let an adversary $\mathcal{A}$ output $(\streamid, c, (v,e), \pi)$ and
  auxiliary data such that the client accepts, but $(\streamid, v)$
  is \emph{incorrect} at~$c$. We upper bound $\Pr[\text{$\mathcal{A}$
    wins}]$ by the advantages of breaking our assumptions.

  \paragraph{Game 0 (real).}
  This is the real experiment underlying \Cref{thm:minnow-security}:
  the verifier runs ${\sf verifyState}(\streamid, c, (v,e), \pi)$
  exactly as specified. By definition, $\Pr[\text{win in $G_0$}] =
    \Pr[\text{$\mathcal{A}$ wins}]$.

  \paragraph{Game 1 (header authentication).}
  In this game, the verifier rejects if the current committee did not
  sign the header~$\header{c}$ of $B$, but the signature verification
  succeeds. Since the client holds the authentic committee
  key~$\validatorpk$ (committee handoffs are authenticated from genesis,
  \Cref{sec:model}), any such header is a forgery, so
  \[
    |\Pr[\text{win in $G_0$}] - \Pr[\text{win in $G_1$}]| \leq
    \mathsf{Adv}^{\sf sig}_{\mathcal{A}}.
  \]

  \paragraph{Game 2 (knowledge soundness).}
  The verifier checks ${\sf ZK.verify}(\updateproof{c}, \{\globaldigest{c},
  \localdigest{c}, \numleaves{c}\})$ but aborts if the extractor fails to
  recover a valid witness. By knowledge soundness of the employed ZK system
  for all involved statements (including recursive verification), there
  exists a polynomial-time extractor $\mathcal{E}$ that, on any accepting
  transcript, outputs a concrete execution trace and witnesses: a sequence
  of public inputs $(\globaldigest{t}, \localdigest{t}, \numleaves{t})$
  for all $t = 0,1,\ldots,c$, with $(\globaldigest{0}, \localdigest{0},
  \numleaves{0}) = (\globaldigest{\sf init}, \localdigest{\sf init},
  \numleaves{\sf init})$ enforced by the circuit base case
  (\Cref{app:circuit-a}); for each checkpoint $t \in [1..c]$, the batch
  contents $\chainupdates{t}$ such that the circuit relation $\localdigest{t} =
  \minnowhash(\localdigest{t-1}, \chainupdates{t})$ holds; and for each
  $(\streamid_j,\streamstate'_j,\streamnewbit_j) \in \chainupdates{t}$, Merkle
  witnesses showing that ${\sf MT.vfAndUpd}$ transforms
  $(\globaldigest{t-1},\streampoint_{\streamid_j})$ into $\globaldigest{t}$
  at the correct index, with the per-update consistency checks enforced by
  the circuit. Consequently, conditioned on knowledge soundness,
  $(\globaldigest{c}, \localdigest{c}, \numleaves{c})$ are uniquely
  determined by the extracted updates $\{\chainupdates{t}\}_{t=1}^c$ and the
  genesis inputs. Any acceptance with an incorrect $(\streamid,v)$ can only
  persist if we later find a collision in $\minnowhash$ or in the Merkle
  hash (next game), or a bad Merkle inclusion (final game). Therefore,
  \[
    |\Pr[\text{win in $G_1$}] - \Pr[\text{win in $G_2$}]| \leq
    \mathsf{Adv}^{\sf ksnd}_{\mathcal{A}}.
  \]

  \paragraph{Game 3 (collision resistance).}
  Relative to $G_2$, we conceptually replace the hash computations by
  ideal bindings: we require, and check in the analysis, that
  $\localdigest{t} = \minnowhash(\localdigest{t-1}, \chainupdates{t})$ and that
  ${\sf MT.vfAndUpd}$ yields the stated $\globaldigest{t}$ for all $t$. In
  the extracted trace from $G_2$ these relations already hold. An adversary
  that changes batch contents or leaf updates while keeping the same digests
  produces either a checkpoint whose committed batch contents differ under
  the same $\localdigest{\cdot}$, a collision in $\minnowhash$, or a Merkle
  update or inclusion altered under the same root, a collision in the Merkle
  hash. Hence $|\Pr[\text{win in $G_2$}] - \Pr[\text{win in $G_3$}]| \leq
  \mathsf{Adv}^{\sf crhf}_{\mathcal{A}}$.

  \paragraph{No win in the final game.}
  In $G_3$, (i) the signed header $\header{c}$ of $B$ binds
  $\localdigest{c}$, (ii) the extracted trace and collision resistance fix
  the unique sequence of updates from genesis yielding $\globaldigest{c}$,
  and (iii) the queried $(\streamid,v,e)$ is a leaf under
  $\globaldigest{c}$. Therefore $v$ equals the accumulator over all updates
  to $\streamid$ up to $c$, so the adversary cannot win. By a telescoping
  sum over the games,
  \[
    \Pr[\text{$\mathcal{A}$ wins}] \leq \mathsf{Adv}^{\sf
      sig}_{\mathcal{A}} + \mathsf{Adv}^{\sf ksnd}_{\mathcal{A}} +
    \mathsf{Adv}^{\sf crhf}_{\mathcal{A}} + \mathsf{negl}(\lambda),
  \]
  which is negligible under the stated assumptions.
\end{proof}

\section{Formal statements}
\label{app:formal}

This appendix collects the formal material that \Cref{sec:model} and
\Cref{sec:minnow-mt-a,sec:minnow-mt-b} summarize: the Merkle-tree
operations, the full statement of the \minnowMTA{} circuit, the procedure
for serving and verifying proofs, and the statement of the cyclic circuit of
\minnowMTB{}.

\subsection{Merkle-tree API}
\label{app:merkle}

The \sysname{} protocols use the following Merkle-tree operations.

\begin{itemize}
  \item $0/1 \gets {\sf MT.vf}(\streamid, \streamstate,
    \globaldigest, \merklepath)$:
    Verify that the value~$(\streamid, \streamstate)$ is a leaf
    in a Merkle tree with root~$\globaldigest$ using the provided
    path~$\merklepath$.
  \item $D' \gets {\sf MT.vfAndUpd}(\streamid, \streamstate, \streamstate',
    \globaldigest, \merklepath)$: Verify inclusion of $(\streamid,
    \streamstate)$ using the path like above.
    Then, update the stream state to the new value~$\streamstate'$
    along the same path and return the new root.
  \item $D' \gets {\sf MT.insert}(\streamid, \streamstate, \globaldigest, idx)$:
    Insert the value~$(\streamid, \streamstate)$ at the given
    index~$idx$ in the Merkle tree
    with root~$\globaldigest$ and return the new root.
\end{itemize}

\subsection{The \minnowMTA{} circuit}
\label{app:circuit-a}
At setup time, we fix the following parameters and hard-code them
into the circuit.\footnote{Unlike many past cryptographic constructions,
  cheap recursion allows an easy way to change parameters: a new circuit with
  different parameters can simply verify the old proof in its base case, and
new proofs thereafter.}
\begin{enumerate}
  \item \textbf{Tree depth}~$\treedepth$: determines the maximum
    number of stream IDs supported, $\maxnumstreams = 2^{\treedepth}$ (e.g.,
    $\treedepth = 30$ for $M = 2^{30}$ stream IDs).
  \item \textbf{Batch size}~$\batchsize$: the batch size used by validators.
  \item \textbf{Batches per bundle}~$\numcircuitbatches$: the number
    of batches processed
    by a single circuit invocation (leading to a bundle of~$\numcircuitupdates =
    \numcircuitbatches \cdot \batchsize$ updates).
\end{enumerate}

\paragraph{Circuit inputs and Merkle-tree update.}
Updating the Merkle tree and deriving the circuit inputs are
intertwined, so we describe them together.

The circuit processes a bundle of stream updates at a time denoted as
$\circuitbundle{} = [\streampoint'_{\streamid_1}, \ldots,
\streampoint'_{\streamid_{\numcircuitupdates}}]$.
In theory, this could represent the updates of a single checkpoint,
or even a batch of checkpoints.
Suppose the Merkle tree currently represents the state after
checkpoint~$\cp-1$; we set $\globaldigest{} \gets \globaldigest{\cp-1}$.
For each update $\streampoint'_{\streamid_j} = (\streamid_j,
\streamstate'_j, \streamnewbit_j)$ in the bundle~$\circuitbundle{}$:
\begin{enumerate}
  \item If $\streamnewbit_j = 0$, retrieve leaf $(\streamid_j,
    \streamstate_j)$ and its Merkle path
    $\merklepath_{\streamid_j}$, and update:
    \[
      \globaldigest{} \gets
      {\sf MT.vfAndUpd}(
        \streamid_j,
        \streamstate_j,
        \streamstate'_j,
        \globaldigest{},
      \merklepath_{\streamid_j}).
    \]
  \item If $\streamnewbit_j = 1$, insert
    $(\streamid_j, \streamstate'_j)$ at the index
    $\numleaves{}$, and compute its corresponding path
    $\merklepath_{\streamid_j}$:
    \[
      \globaldigest{} \gets
      {\sf MT.insert}(
        \streamid_j,
        \streamstate'_j,
        \globaldigest{},
      \numleaves{}).
    \]
\end{enumerate}
The tuples
$\{ \streampoint_{\streamid_j}, \streampoint'_{\streamid_j},
\merklepath_{\streamid_j} \}$
(for new streams, $\streampoint_{\streamid_j} = \bot$)
constitute the private inputs to the ZK circuit.
If~$\circuitbundle{}$ corresponds to the updates of a single checkpoint~$\cp$,
then the above process computes the new Merkle root after checkpoint~$\cp$,
i.e., $\globaldigest{} = \globaldigest{\cp}$.

\paragraph{Circuit statement ($\minnowMTAstatement$).}

We now specify the \minnowMTA{} ZK circuit's statement~$\minnowMTAstatement$.
In the base case, it accepts $\updateproof{\cp'}=\bot$ and sets all the
public inputs to their initial values: $\cp_{\sf init} = 0$,
$\globaldigest{\sf init} = 0$, $\localdigest{\sf init} = 0$,
$\numleaves{\sf init} = 0$.

\begin{itemize}
  \item \textbf{Private inputs:}
    \begin{enumerate}
      \item Previous proof $\updateproof{\cp'}$ with its public inputs
        $\pubinputs{\cp'} = \{\globaldigest{\cp'},
        \localdigest{\cp'}, \numleaves{\cp'}, \cp'\}$
        (set to $\bot$ in the base case);
      \item Updates:
        $\{(\streampoint_{\streamid_j}, \streampoint'_{\streamid_j},
        \merklepath_{\streamid_j})\}_{j \in [\numcircuitupdates]}$.
    \end{enumerate}
  \item \textbf{Public inputs:}
    $\pubinputs{\cp} = \{\globaldigest{\cp}, \localdigest{\cp},
    \numleaves{\cp}, \cp\}$.

  \item \textbf{Statement logic:}
    \begin{enumerate}
      \item \emph{Base case:}
        If $\cp = \cp_{\sf init}$, set
        $\globaldigest{\cp} \gets \globaldigest{\sf init}$,
        $\localdigest{\cp} \gets \localdigest{\sf init}$,
        $\numleaves{\cp} \gets \numleaves{\sf init}$,
        and accept. Else assert $\cp = \cp' + 1$.
      \item \emph{Verify prior proof:}
        ${\sf ZK.verify}(\updateproof{\cp'}, \pubinputs{\cp'})$.
      \item \emph{Update Merkle tree:}
        Initialize
        $\globaldigest{} \gets \globaldigest{\cp'}$, $\numleaves{}
        \gets \numleaves{\cp'}$.
        For each update $j \in [\numcircuitupdates]$:
        \begin{itemize}
          \item Update root:
            $\globaldigest{} \gets {\sf MT.vfAndUpd}(\streamid_j,
              \streamstate_j, \streamstate'_j, \globaldigest{},
            \merklepath_{\streamid_j})$;
          \item If $\streamnewbit_j = 1$, assert
            $\merklepath_{\streamid_j}.\text{index} = \numleaves{}$
            and increment $\numleaves{}$ by one.
          \item If $\streamnewbit_j = 0$, assert
            $\streampoint_{\streamid_j}.\streamid =
            \streampoint'_{\streamid_j}.\streamid$.
        \end{itemize}
      \item \emph{Checks:}
        Assert
        $\globaldigest{} = \globaldigest{\cp}$,
        $\numleaves{} = \numleaves{\cp}$, and
        $\localdigest{\cp} =
        \minnowhash(\localdigest{\cp'},
        \{\streampoint'_{\streamid_j}\}_{j \in [\numcircuitupdates]})$.
    \end{enumerate}
\end{itemize}

The circuit outputs the updated proof $\updateproof{\cp}$ attesting to
the correctness of the new global state commitment~$\globaldigest{\cp}$.

\subsection{Serving and verifying proofs}
\label{app:client}
Suppose a client queries the state of a stream identifier~$\streamid$ at
checkpoint~$\cp$.
The ZK service computes
$(\streamstate, \validityproof)
\gets {\sf readState}(\streamid, \cp)$,
where $\validityproof = \{\updateproof{\cp}, \header{\cp},
\merklepath_{\streamid}\}$.
The proof consists of three components:

\begin{itemize}
  \item The recursive ZK proof~$\updateproof{\cp}$ with its public
    inputs $\{\globaldigest{\cp}, \localdigest{\cp}, \numleaves{\cp}, \cp\}$.
  \item The checkpoint header~$\header{\cp}$, which includes the
    stream update commitment~$\localdigest{\cp}$ and a validator
    signature over the header (see \Cref{sec:model}).
  \item The Merkle path~$\merklepath_{\streamid}$ for the queried stream.
\end{itemize}

The client verifies the proof using the committee keys~$\validatorpk$
(\Cref{sec:model}) as follows:
\begin{enumerate}
  \item Check the Merkle path:
    ${\sf MT.vf}(\streamid, \streamstate, \globaldigest{\cp},
    \merklepath_{\streamid})$.
  \item Verify the validator-signed checkpoint header:
    ${\sf Sig.verify}(\header{\cp}, \validatorpk)$,
    and ensure that
    $\header{\cp}.\localdigest{} = \localdigest{\cp}$
    and
    $\header{\cp}.\cp = \cp$.
  \item Verify the ZKP:
    ${\sf ZK.verify}(\updateproof{\cp}, \{\globaldigest{\cp},
    \localdigest{\cp}, \numleaves{\cp}, \cp\})$.
\end{enumerate}

This procedure corresponds to the API call
${\sf verifyState}(\streamid, \cp, \streamstate, \validityproof)$.

\subsection{The cyclic circuit of \minnowMTB{}}
\label{app:cyclic}
The statement~$\sequentialphasestatement$ of the cyclic circuit is as follows:
\begin{itemize}
\item \textbf{Private inputs:} (i) last proof and inputs:
$\sequentialphaseproof{\cp'}$, $\sequentialphasepubinputs{\cp'} =
\{\globaldigest{\cp'}, \localdigest{\cp'}, \numleaves{\cp'}, \cp'\}$,
(ii) aggregated proof for checkpoint $\cp$:
$\intermediatephaseproof{\cp'}{\cp}$.
\item \textbf{Public inputs:}
$\sequentialphasepubinputs{\cp}=\{\globaldigest{\cp}, \newfield{\cp},
\numleaves{\cp}, \cp\}$.
\item \textbf{Statement:}
\begin{enumerate}
\item Verify the previous sequential proof:
  ${\sf ZK.verify}(\sequentialphaseproof{\cp'},
  \sequentialphasepubinputs{\cp'})$ and check $\cp = \cp' + 1$.
\item Construct public inputs for
  aggregation circuit: $\intermediatephasepubinputs{\cp'}{\cp} =
  \{\globaldigest{\cp'}, \globaldigest{\cp}, \localdigest{\cp'},
  \localdigest{\cp}, \numleaves{\cp'}, \numleaves{\cp}\}$ and verify
  ${\sf ZK.verify}(\intermediatephaseproof{\cp'}{\cp},
  \intermediatephasepubinputs{\cp'}{\cp})$.
\end{enumerate}
\end{itemize}

\section{\hashprotocol{}}
\label{sec:minnow-h}

We now describe a lightweight variant of \sysname{} in which the ZK
service tracks only a \emph{fixed, small set} of stream IDs
$\streamidset = \{\streamid_1,\ldots,\streamid_m\}$, rather than all
streams on the chain. This is useful when an application cares about
completeness only for a few selected streams and wishes to minimize
proving costs.

The ZK service maintains an internal state
$\globalstatemap{\cp} = [(\streamid_i,\streamstate_i)]_{i=1}^m$
after checkpoint~$\cp$, and commits to it using a simple hash
\[
  \globaldigest{\cp} = H(\globalstatemap{\cp}).
\]
Validators, as in the main protocols, compute a per-checkpoint
hash-chain commitment~$\localdigest{\cp}$ over the (sorted) stream
updates they derive from checkpoint contents, and include
$\localdigest{\cp}$ in the checkpoint header.

At each checkpoint, the recursive circuit for \hashprotocol{} takes as
input:
(i) the previous proof and its public inputs
$(\globaldigest{\cp-1}, \localdigest{\cp-1})$,
(ii) the validator-supplied stream updates for this checkpoint, and
(iii) the previous state $\globalstatemap{\cp-1}$.
It checks that:
\begin{itemize}
  \item the prior proof is valid and consistent with
    $\globaldigest{\cp-1}$ and $\localdigest{\cp-1}$,
  \item applying the checkpoint’s updates to $\globalstatemap{\cp-1}$
    yields a new state $\globalstatemap{\cp}$ consistent with the
    tracked stream set~$\streamidset$, and
  \item the public outputs satisfy
    $\globaldigest{\cp} = H(\globalstatemap{\cp})$ and
    $\localdigest{\cp} = \minnowhash(\localdigest{\cp-1},
    \chainupdates{\cp})$.
\end{itemize}
A client that wants the state of some $\streamid \in \streamidset$ at
checkpoint~$\cp$ receives:
(i) the claimed value~$\streamstate$,
(ii) the full list $\globalstatemap{\cp}$ (of size~$m$),
(iii) the recursive proof for~$\cp$, and
(iv) the signed checkpoint header.
To verify, the client (a) checks that $(\streamid,\streamstate)$
appears in $\globalstatemap{\cp}$, (b) recomputes
$H(\globalstatemap{\cp})$ and matches it with the proof’s
$\globaldigest{\cp}$, (c) verifies the recursive proof, and (d)
verifies the validator signature on the header and the embedded
$\localdigest{\cp}$.
For small~$m$, this yields short proofs and constant-time verification
for each query.

\section{Evaluation details}
\label{app:eval-details}

This appendix supplements \Cref{sec:eval} with the measurements and
derivations behind the numbers quoted there: the circuit sizes that fix the
batch and sub-bundle parameters (\Cref{app:circuit-sizes}), the number of
machines that \minnowMTB{} needs and the full special-configuration table
(\Cref{app:machines}), and the per-prover latency breakdown of the
end-to-end runs (\Cref{sec:breakdown}).

\subsection{Circuit sizes}
\label{app:circuit-sizes}

Two circuit-size measurements determine the parameters used in
\Cref{sec:eval}. \Cref{tab:batch-size-experiments} shows that the batch
size~$\batchsize$ barely changes the size of the base circuit, so it can be
chosen for the validators' convenience (\Cref{sec:validator-overhead}).
\Cref{tab:num-updates-par} gives the largest sub-bundle the base circuit of
\minnowMTB{} handles at each Plonky2 degree, which fixes the sub-bundle
size~$\subbundlesize$ of the special configuration.

\begin{table}[t]
  \begin{minipage}[t]{0.47\textwidth}
    \centering
    \small
    \begin{tabular}{lr}
      \toprule
      Batch size ($\batchsize$) & Gates \\
      \midrule
      10   & 50,000 \\
      100  & 49,890 \\
      1,000 & 49,876 \\
      \bottomrule
    \end{tabular}
    \caption{Size of the base circuit for varying batch sizes; the circuit
      processes $\parallelphasebatchsize = 1{,}000$ updates (batch size times
      number of batches).}
    \label{tab:batch-size-experiments}
  \end{minipage}\hfill
  \begin{minipage}[t]{0.47\textwidth}
    \centering
    \small
    \begin{tabular}{lr}
      \toprule
      Degree & Sub-bundle size \\
      \midrule
      14 & 200  \\
      15 & 450  \\
      16 & 900  \\
      17 & 1,900 \\
      \bottomrule
    \end{tabular}
    \caption{Maximum sub-bundle size ($\numcircuitbatches \cdot \batchsize$
      updates) that the base circuit~$\parallelphasestatement$ of \minnowMTB{}
      handles at each Plonky2 degree, tested in increments of fifty with
      $\maxnumstreams=2^{30}$ and $\batchsize = 50$.}
    \label{tab:num-updates-par}
  \end{minipage}
\end{table}

\subsection{Number of machines for \minnowMTB{}}
\label{app:machines}

To keep pace with the chain, \minnowMTB{} needs
\[
1
+ \numsubbundles{} \cdot \left(\frac{\parallelphasetime}{\bundletime}\right)
+ \numsubbundles{} \cdot \left(\frac{\intermediatephasetime}{\bundletime}\right)
\cdot \left(\frac{1 -
\tfrac{1}{\aggregationfactor^{\aggregationlevels}}}{\aggregationfactor
- 1}\right).
\]
machines: $\numsubbundles{} \cdot \parallelphasetime / \bundletime$ for the
parallel phase, $\numsubbundles{} \cdot (\tfrac{1}{\aggregationfactor} +
\tfrac{1}{\aggregationfactor^2} + \cdots +
\tfrac{1}{\aggregationfactor^{\aggregationlevels}}) \cdot
\intermediatephasetime / \bundletime$ for the aggregation phase, and one for
the sequential phase.

In the special configuration of \Cref{sec:minnow-b-microbenchmarks}, where $\numsubbundles{} = 3^\aggregationlevels$, this gives
\[
  1 + \beta + \beta (\tfrac{1 -
    \tfrac{1}{\aggregationfactor^{\aggregationlevels}}}{\aggregationfactor-1})
  = 1 + 3^\aggregationlevels + (3^\aggregationlevels - 1) / 2
    = (3^{\aggregationlevels + 1} + 1) / 2.
\]

This special configuration performed best among the parameter
choices we experimented with, but we do not claim it is globally optimal.
Choosing parameters for \minnowMTB{} is a multi-dimensional
optimization problem: for a given target throughput (which fixes the
bundle size~$\bundlesize$), one would like to pick the batch size,
sub-bundle size, and aggregation factor that minimize end-to-end latency.
A natural dual formulation is to fix a cap on the number of machines and
ask for the configuration that maximizes sustainable throughput.
We leave a systematic exploration of this design space to future work.

\begin{table}
  \centering
  \small
  \begin{tabular}{rrrrr}
    \toprule
    Updates ($\bundlesize$) & Levels ($\aggregationlevels$) &
    Latency ($\latency$, s) & Throughput (updates/s)        & Machines               \\
    \midrule
    1--200                  & 0                             & 1.26     & 317   & 2   \\
    201--600                & 1                             & 1.89     & 952   & 5   \\
    601--1,800              & 2                             & 2.52     & 2,857 & 14  \\
    1,801--5,400            & 3                             & 3.15     & 8,571 & 41  \\
    5,401--16,200           & 4                             & 3.78     & 25,714 & 122 \\
    \bottomrule
  \end{tabular}
  \caption{Performance of \minnowMTB{} for different bundle sizes at
    the special configuration (see text). Throughput and machines correspond to
    fully utilized bundles (largest $\bundlesize$ in each row).}
  \label{tab:latency-throughput}
\end{table}

\subsection{Latency breakdown}
\label{sec:breakdown}

This section expands on \Cref{sec:end-to-end-evaluation} by analyzing the
individual components of the latency of \minnowMTB{}.
\Cref{fig:latency-breakdown} shows the breakdown of prover
latency for bundle sizes of $N=200$, $N=600$, and $N=1{,}800$. As predicted by the microbenchmarks in \Cref{sec:eval}, the
median latency of each prover stays at about 0.63\,s,
independent of the bundle size.

The partial witness builder (\emph{Pw builder} in the figure) is the
time the primary takes to construct the partial witness. Its latency
grows slightly with the number of workers, and thus with the bundle size,
but remains low.

The figure corroborates the scalability claims of \Cref{sec:eval}: prover
latency is invariant in the number of workers.
Consequently, any observed differences between the predicted and
measured end-to-end latencies can be attributed to additional
overheads, such as network latency, client-side queuing delays,
context switching on the primary machine, and the time required to
assemble the partial witnesses.

\end{document}